\newcommand{\kk}{\mathbb{K}}
\newcommand{\FF}{\mathbb{F}}
\newcommand{\ZZ}{\mathbb{Z}}
\newcommand{\QQ}{\mathbb{Q}}
\newcommand{\CC}{\mathbb{C}}
\newcommand{\HH}{\mathcal{H}} 
\newcommand{\mat}[1]{\left[\begin{array}{cc}#1\end{array}\right]}
\newcommand{\zdim}[3]{[#1/#2\,| #3]}
\newcommand{\ncap}{} 
\newcommand{\Gr}{\mathrm{Gr}}
\newcommand{\join}{\vee} 
\newcommand{\meet}{\wedge} 
\newcommand{\opg}{\boldsymbol{\cdot}}
\newcommand{\img}{\mathop{\mathrm{im}}} 
\newcommand{\hclass}[1]{\overline{#1}}
\theoremstyle{plain}
\newtheorem{all}{}[section]
\newtheorem{proposition}[all]{Proposition} 
\newtheorem{theorem}[all]{Theorem} 
\newtheorem{lemma}[all]{Lemma} 
\newtheorem{corollary}[all]{Corollary} 
\theoremstyle{definition}
\newtheorem{definition}[all]{Definition}
\newtheorem{example}[all]{Example} 
\newtheorem{remark}[all]{Remark} 
\newtheorem{exercise}[all]{Exercise} 
\newtheorem{notation}[all]{Notation} 
\newtheorem{construction}[all]{Construction}
\date{}
\begin{document}

\title{\bf\large On the Cleaning Lemma of Quantum Coding Theory}

\author{\normalsize Gleb Kalachev\footnote{Lomonosov Moscow State University, 
kalachev.gleb2@gmail.com},
\hspace{1em} Sergey Sadov\footnote{serge.sadov@gmail.com}}

\maketitle

\noindent
{\small
{\bf Abstract.}
The term ``Cleaning Lemma'' refers to a family of similar propositions that have been used in Quantum Coding Theory
to estimate the minimum distance of a code in terms of its length and dimension. We show that the mathematical core is a simple fact of linear algebra
of inner product spaces; moreover, it admits a further reduction to a combinatorial, lattice-theoretical level.
Several concrete variants of the Cleaning Lemma and some additional propositions are derived as corollaries of the proposed approach.
}

\medskip\noindent
{\small
{\bf Keywords:}
 orthogonal space, modular lattice,
order-reversing involution, 
lattice of subspaces,
quantum error correction, 
stabilizer codes

\medskip\noindent
%
MSC: 06C99 15A03 81P73
} 

\section{Introduction}

\subsection{Goal: to make simple look simple}

The purpose of this paper is to reveal a simple mathematical nature of the so-called Cleaning Lemma (CL). This is a device employed in the theory of quantum error correcting codes (QECC) to obtain 
fundamental bounds on the  minimum distance for certain classes of codes implied by the prescribed physical limitations. 

We interpret CL as a consequence of a proposition of elementary linear algebra, where 
the distracting mathematical features usually present in the literature on QECC, such as the Pauli matrices, the characteristic 2, etc.,
play no role. The relevant context is just inner product spaces.
(A skeleton of the idea can be seen in the chain: Proposition~\ref{prop:factor-isom} ---
Theorem~\ref{thm:InnProd} --- Proposition~\ref{prob:CL-stabilizer}.)

The most abstract prototype of CL is stated at a combinatorial level, involving lattice-theoretical terminology.

CL is found in varying contexts and varying formulations. 
It is, properly speaking, a qualitative result, saying that a certain set of qubits can be ``cleaned'' of logical operators. However, a primary technical result that is usually proved first is a quantitative generalization: it is a reciprocity relation of the form $\ell+\ell'=\mathrm{const}$ between certain, ``dual'' to each other, numerical parameters of the code. The phenomenon of ``cleaning'' properly occurs when one of the numbers $\ell$ or $\ell'$ equals zero. 
A proof is typically short, so one can guess that the result is rather simple; however, because of a rich context it is not easy to single out the essential elements.

One can draw parallels between CL and Fredholm's Alternative (FA), which, in the qualitative form says that a system of linear equations with a square matrix has a solution
for any right-hand side if and only if the homogeneous system has only the trivial solution. The quantitative generalization is the equality of the row and  column ranks or, equivalently, the equality to zero of the index of any finite-dimensional linear operator. In fact, there is more than mere analogy: both CL and FA (even over a non-commutative field) are consequences of a common lattice-theoretical proposition. 
Furthermore, FA can be interpreted at an yet higher level involving a canonical construction --- specifically, the standard short exact sequence. We will mention a certain canonical construction related to CL, too. 

Our intention is to facilitate the dissemination of CL in expository and teaching practice aimed at students and general mathematical audience. 
Reader's expertise in Quantum Computing
or even in classical coding theory is not a prerequisite here. 
There is no physics in this paper and mathematics is very basic. 
However, CL should be appreciated in the context of its applications, which inevitably 
assume much more elaborate background 
and subject-specific terminology. A very sketchy review of such applications is given in the next subsection.
An accessible introduction to the topic, suitable for a  mathematician non-expert in quantum codes, can be found in \cite{Haah2016}.
As a general reference on quantum computing we mention the textbook \cite{KitaevShenVyalyi} and the article \cite{Kribs2005}, while a variety of  topics on quantum error correction is covered in \cite{LaGuardia2020}.

The structure of the paper is outlined in the last subsection of this Introduction. One can legitimately ask why is the paper claiming to offer a ``simple'' approach so long.
The answer is, while hoping that our analysis may open new ways of looking at CL and related things, we do not know what can bear fruit. Hence we present a variety of formulations, seek interconnections, and go into some digressions.

\subsection{Cleaning Lemma as a tool for estimating parameters of quantum codes}

CL has been primarily used to derive upper and lower bounds on the minimum distance for certain classes of quantum codes. Let us recall some notions related to quantum error correction.

Interaction of a quantum system, such as a quantum computer, with environment necessarily causes a gradual decoherence of the quantum state. At the logical level this corresponds to the occurrence of errors.
Quantum error correction is a way to ensure a robust logical functioning of the system.
Much as in the classical coding theory, a redundancy is introduced: a system of $k$ logical qubits is encoded by a system of $n>k$ physical qubits. 
Mathematically, a quantum $[[n,k]]$-code is an embedding $(\CC^2)^{\otimes k}\subset (\CC^2)^{\otimes n}$; the numbers $k$ and $n$ are, respectively, the code dimension and the code length. Each factor in the tensor product
$(\CC^2)^{\otimes n}$ represents a single physical qubit.  The embedded space is referred to as the {\em code space}.

Further concepts, concerning error detection and correction, will be described in the context of the widely used class of {\em stabilizer codes}. 

In a stabilizer code, the code space 
is comprised by invariant vectors of a set of commuting self-adjoint unitary operators (the {\em stabilizers}), each one being the tensor product of Pauli matrices acting on their respective qubits.
One can verify whether the quantum state belongs to the code by a procedure of syndrome measurement, where stabilizers are used as measurement operators. 
Errors are also described by unitary operators. Occurrence of errors may cause a failure of some of the vectors in the code space to be invariant; this can be detected by the measurement.

Logical errors are the unitaries for which the code space is invariant; for this reason they are called {\em logical operators}. A logical operator acting trivially (as a scalar operator) on the code space is safe: the logical state remains intact; such errors are called {\em degenerate}. The danger lies with errors that are neither detectable (i.e.\ leave the code space invariant) nor acting trivially on the code space.

An important parameter of a quantum code is the {\em minimum distance}. It is the least number of errors
in individual qubits that result in a logical error. The greater the minimum distance, the more fault-tolerant the code is.

Let $M$ be a set of physical qubits. A logical operator is {\em supported on $M$}, if it acts on the complement of $M$ trivially.
The set $M$ is said to be a {\em correctable region}\ if any logical operator supported on $M$ is degenerate. 

The measurement procedure by itself is a source of errors. Therefore it is important to keep the number of operations involved in syndrome measurement small.
For stabilizer codes this corresponds to the requirement that the number of qubits changed by any stabilizer be bounded by some constant.

In the existing architectures of quantum computers, two-qubit operations are available only when the qubits are spatially adjacent.
This leads to the additional requirement:
the qubits acted upon by a stabilizer must be situated close to each other. Such codes are called (\emph{geometrically}) {\em local}.

The constraints on the main parameters of the code --- dimension and distance --- implied by locality were studied in the papers
\cite{Bravyi&Terhal:2009,Bravyi:UpperDistDim}.

In \cite{Bravyi&Terhal:2009},
where CL was apparently first introduced,
it was used to obtain an upper bound on the code distance for local quantum codes.
The proof of the upper bound goes by way of contradiction. Assuming that the code distance equals $d$, one shows that sets of qubits inside a cube with boundary of cardinality $\le\alpha d$ are correctable; thus for a sufficiently large $d$ the whole set of qubits turns out to be correctable, contrary to the fact that the dimension of the code is nonzero.

In the paper \cite{Bravyi:UpperDistDim}
an upper bound on the code dimension in terms of the minimum distance and length of a 2D local quantum code is obtained.
Here an additional idea is employed, which allows one to show that if the whole set of qubits is partitioned into a set $C$ and a pair of correctable sets $A$, $B$,
then the code dimension cannot exceed $|C|$, cf.~Proposition~\ref{prop:codedim-abc}.
In the case of stabilizer codes the construction of appropriate correctable sets $A$ and $B$ can be ensured by CL. 
In fact, \cite{Bravyi:UpperDistDim} 
goes beyond stabilizer codes and employs a generalization of CL, the ``Disentangling lemma'' (DL), which involves additional structure and does not fit in the concise formalism developed here. Unlike with CL, no ``quantitative'' form of DL has been given.

In the paper \cite{baspin2021:treewidth-dist}
the above mentioned technique for proving the upper bounds from \cite{Bravyi&Terhal:2009,Bravyi:UpperDistDim}
is generalized to allow arbitrary stabilizer codes, not necessarily possessing the property of locality. As a substitute for locality,
the treewidth of the connectivity graph of the quantum code is used. In the connectivity graph one can define the boundary of a qubit set and use CL to prove that the minimum distance does not exceed the treewidth by order of magnitude. 

CL can also been used to obtain a lower estimate for the code distance $d$. To this end, it suffices to show that for any qubit subset $M$ of cardinality 
$|M|< d$ the dimension of the space of logical operators supported on the complement of $M$ is equal to the dimension of the space of all logical operators; then CL asserts that the dimension of the space of logical operators supported on $M$ equals zero, that is, $M$ is a correctable region.

The outlined approach has been used 
\cite{Ioshida2013:FractalQCodes} to obtain lower estimates of the form $d\ge \mathrm{const}\cdot L$ for fractal quantum codes where qubits are positioned at the nodes of the cube lattice with sidelength $L$. 

Many constructions of quantum CSS codes are borrowed from algebraic topology or can be naturally interpreted in geometric terms. 
Example include the toric code \cite{Kitaev:toric:1997}, the homological product codes \cite{Bravyi:HMP:2014}, and recently introduced quantum code families: fiber bundle codes \cite{Hastings:2020:fiber}, lifted product codes \cite{LiftedProduct} and balanced product codes \cite{BalancedProduct}. 
Conversely, a number of results of the Quantum Coding Theory, suitably interpreted, lead to new results in geometric topology. For instance, in \cite{Freedman:qcodes2manifold} first examples of manifolds exhibiting the power law $\ZZ_2$-systolic freedom
have been constructed using recent bounds \cite{Hastings:2020:fiber,LiftedProduct} on the code distance.
This is a deep subject and we do not go into it; yet, to illustrate the use of the homological dictionary, in Sec.~\ref{ssec:QCSS-chain} we formulate one of the versions of CL in terms of a chain complex.

\subsection{The paper structure}

There are four types of material in the paper:

\begin{itemize}
\item 
the Cleaning Lemma as such, in different guises (Section~\ref{sec:CL});
\item
related linear-algebraic constructions
(Section~\ref{sec:iso} and \S~\ref{ssec:lalg});
\item
an abstract, lattice-theoretical core
(\S~\ref{ssec:comb});
\item
extras and miscellania (Section~\ref{sec:misc}).
\end{itemize}

In order to avoid mixing mathematical and quantum-computing terminology,
we postpone the formulations of CL until after the mathematical groundwork is completed. 

Section~\ref{sec:iso} is purely linear-algebraic. We deal with a vector space,
its dual, and three subspaces. An equivalent 
treatment in terms of a simplectic space and  its subspaces is then given.
The picture will have a feel of {\em d\'{e}j\`{a}~vu}\ for many readers. 
And one can take a shortcut from Proposition~\ref{prop:factor-isom} to Theorem~\ref{thm:InnProd}. 

Alternatively, Section~\ref{sec:genrk}
offers a slow but more general 
approach. The purpose of \S~\ref{ssec:comb}, where
we descend to the combinatorial level, is to 
demonstrate that even the linear structure is somewhat redundant for CL-related identities. 
The main abstract result is Theorem~\ref{thm:Agraded-2lat}. The subsequent results of Sec.~\ref{sec:genrk} are, in essence, its specializations.

The  part of theory applicable to QECC lies at the linear-algebraic level (\S~\ref{ssec:lalg}) culminating with Theorem~\ref{thm:InnProd}, from which there is just one step to CL: translation into the language that fits in with literature on QECC.

Notice that Section~\ref{sec:genrk} deals only with numerical (rank) identities; there are no canonical isomorphisms. 

In Section~\ref{sec:CL} CL at last appears as a technical statement. 
Dictionaries rather than logical inferences play the dominant role here. We discuss versions of CL for stabilizer codes (\S~\ref{ssec:QSC}), CSS codes (\S~\ref{ssec:QCSS}), and subsystem codes (\S~\ref{ssec:SubC}). In addition, in Subsection~\ref{ssec:QCSS-chain} we recast the CSS version of CL in terms of a 
chain complex.

The last Section~\ref{sec:misc} is a sampling of other applications, arguably selected quite randomly, of the formalism developed in Sec.~\ref{sec:genrk}.    

First, in \S~\ref{ssec:misc-FA}
we treat the Fredholm Alternative from this point of view. Then, in \S~\ref{ssec:misc-QC}
we turn again to quantum codes and prove two 
propositions, one new and one known --- but in a coordinate-free form.
Finally, in \S~\ref{ssec:misc-groups}, we present a toy structure involving a 
product of Abelian groups where analogs of correctable regions and ``cleaning'' can be observed in a mathematically different situation. 

Besides the mentioned applications, little digressions and extras are interspersed throughout as Exercises. The proofs should present no difficulty.

\subsection*{Preliminaries: notation, the Grassmanian, modularity}
If $V$ is a vector space (over some field), we denote by $\Gr(V)$ the Grassmanian of $V$ (the set of all subspaces). The relation of set inclusion makes $\Gr(V)$ a partially ordered set.  Moreover, it is a modular 
lattice. (Basic definitions concerning lattices can be found, e.g., in \cite[Ch.~3]{Stanley1986}.) 
Subspaces (and elements of general lattices 
in \S~\ref{ssec:comb})
will be denoted by Greek letters.
The lattice operations
`join' ($\join$) and `meet' ($\meet$) correspond in the lattice $\Gr(V)$ to the sum
($+$) and intersection ($\cap$) of subspaces.
We will usually omit the symbols $\meet$ and $\cap$ and write $\alpha\beta$ instead of $\alpha\land\beta$
or $\alpha\cap\beta$.

Recall that the operation $\cap$ in $\Gr(V)$ does not  generally 
distribute 
over
$+$. Instead, the weaker {\em modular law}\ 
holds: if $\alpha\subset\gamma$, then
$(\alpha+\beta)\gamma=\alpha+\beta\gamma$.

\section{A canonical construction}
\label{sec:iso}

\subsection{The picture in a space and its dual}
\label{ssec:iso-dual}

Let $V$ be a vector space (over some field $\kk$, which we assume fixed throughout) and $V^*=\mathrm{Hom}(V,\kk)$ be the dual vector space. 
An action of $f\in V^*$ on $x\in V$ will be written as the coupling: 
$f(x)=\langle f,x\rangle$.
If $\xi\in \Gr(V)$, we denote by $\xi^\bot = \{f\in V^*\mid \langle f,x\rangle=0\mbox{ for all }x\in\xi\}$ the annihilator of $\xi$.
Similarly, for a subspace $\eta\subset V^*$
its annihilator in $V$ is denoted by $\eta^\bot$.

If $\xi\subset V$, $\eta\subset\xi^\bot\subset V^*$, then there is the canonical isomorphism $\phi:(\xi^\bot/\eta)\to (\eta^\bot/\xi)^*$, $\phi(f+\eta):(x+\xi) \mapsto 
\langle f,x\rangle
$ for $f\in \xi^\bot$, $x\in \eta^\bot$. 
Below we identify $(\xi^\bot/\eta)$ and $(\eta^\bot/\xi)^*$ assuming that we use the isomorphism $\phi$.

\begin{proposition}
\label{prop:factor-isom}
Let $V$ be a vector space, $\xi\subset V$ and $\eta\subset V^*$ be subspaces of $V$ and $V^*$, respectively. Suppose that  $\xi\subset\eta^\bot$ (equivalently, $\eta\subset\xi^\bot$). For any subspace $\alpha\subset V$ the vector space $\alpha_1=(\eta^\bot\ncap\alpha+\xi)/\xi$ is a subspace of $\eta^\bot/\xi$ and the vector space $\alpha_2=(\xi^\bot\ncap\alpha^\bot+\eta)/\eta$ is a subspace of $\xi^\bot/\eta$. The subspace $\alpha_2$ is the annihilator of $\alpha_1$: 
$$
 \alpha_2=\alpha_1^\bot.
$$
\end{proposition}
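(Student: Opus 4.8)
The plan is to work entirely through the pairing that identifies $\xi^\bot/\eta$ with $(\eta^\bot/\xi)^*$ via $\phi$, under which the annihilator $\alpha_1^\bot$ is a well-defined subspace of $\xi^\bot/\eta$, and then to prove $\alpha_2=\alpha_1^\bot$ by a two-sided inclusion. First I would record that both quotients are legitimate: since $\xi\subseteq\eta^\bot$ and $\eta\subseteq\xi^\bot$, the numerator $\eta^\bot\cap\alpha+\xi$ lies in $\eta^\bot$ and $\xi^\bot\cap\alpha^\bot+\eta$ lies in $\xi^\bot$, so indeed $\alpha_1\subseteq\eta^\bot/\xi$ and $\alpha_2\subseteq\xi^\bot/\eta$. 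Under $\phi$ the induced pairing is $\langle f+\eta,\,x+\xi\rangle=\langle f,x\rangle$ for $f\in\xi^\bot$, $x\in\eta^\bot$, and the cosets in $\alpha_1$ are exactly those represented by vectors $x\in\eta^\bot\cap\alpha$.

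For the easy inclusion $\alpha_2\subseteq\alpha_1^\bot$ I would argue directly from the pairing. A coset in $\alpha_2$ has a representative $g+h$ with $g\in\xi^\bot\cap\alpha^\bot$ and $h\in\eta$, so it equals $g+\eta$. Pairing it against any $x+\xi\in\alpha_1$ with $x\in\eta^\bot\cap\alpha$ gives $\langle g,x\rangle=0$, because $g\in\alpha^\bot$ and $x\in\alpha$. This inclusion uses nothing beyond the definitions and holds for an arbitrary vector space.

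The reverse inclusion $\alpha_1^\bot\subseteq\alpha_2$ is the substantive step. Unwinding the pairing, $f+\eta\in\alpha_1^\bot$ means $f\in\xi^\bot$ and $\langle f,x\rangle=0$ for all $x\in\eta^\bot\cap\alpha$, i.e.\ $f\in(\eta^\bot\cap\alpha)^\bot$; hence $\alpha_1^\bot=\bigl(\xi^\bot\cap(\eta^\bot\cap\alpha)^\bot\bigr)/\eta$. I would then invoke the annihilator identity $(\beta\cap\gamma)^\bot=\beta^\bot+\gamma^\bot$ together with the double-annihilator relation $(\eta^\bot)^\bot=\eta$ to rewrite $(\eta^\bot\cap\alpha)^\bot=\eta+\alpha^\bot$. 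Finally, since $\eta\subseteq\xi^\bot$, the modular law gives $\xi^\bot\cap(\eta+\alpha^\bot)=\eta+(\xi^\bot\cap\alpha^\bot)$; passing to the quotient by $\eta$ produces exactly $\alpha_2$, closing the loop.

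The main obstacle is concentrated in that reverse inclusion, and specifically in the two annihilator formulas $(\beta\cap\gamma)^\bot=\beta^\bot+\gamma^\bot$ and $(\eta^\bot)^\bot=\eta$: these are the only points where anything more than formal manipulation is needed. They hold unconditionally when $V$ is finite-dimensional (where they follow from dimension counting together with the always-valid inclusions $\beta^\bot+\gamma^\bot\subseteq(\beta\cap\gamma)^\bot$ and $\eta\subseteq\eta^{\bot\bot}$), and they persist in general precisely when $\eta$ is weak-$*$ closed, which is implicit in treating $\phi$ as a genuine isomorphism. Once these are secured, the modular-law step is purely mechanical and the identity $\alpha_2=\alpha_1^\bot$ follows.
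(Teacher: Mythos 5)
Your proof is correct and takes essentially the same route as the paper's: the substantive content in both is the computation $(\eta^\bot\cap\alpha+\xi)^\bot=(\eta+\alpha^\bot)\cap\xi^\bot=\eta+\alpha^\bot\cap\xi^\bot$, resting on annihilator duality and the modular law under $\eta\subset\xi^\bot$, with your ``unwinding'' of $\alpha_1^\bot$ being exactly the paper's identity $(\zeta/\xi)^\bot=\zeta^\bot/\eta$ for $\xi\subset\zeta\subset\eta^\bot$ specialized to $\zeta=\eta^\bot\cap\alpha+\xi$. The only differences are expository: you argue by two inclusions rather than one direct equality, and you make explicit the finite-dimensionality (weak-$*$ closedness of $\eta$) caveat that the paper leaves implicit in treating $\phi$ as an isomorphism.
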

\begin{proof}
    For any subspace $\zeta\subset V$ such that $\xi\subset\zeta\subset\eta^\bot$ we have 
    \begin{equation}
    \label{eqn:factor-compl}
        (\zeta/\xi)^\bot=\{f+\eta\mid f\in V^*: \langle f,x\rangle=0\;\mbox{ for all }\,x\in \zeta\}=\zeta^\bot/\eta,
    \end{equation}
    where the right-hand side is a subspace of $\xi^\bot/\eta$.
    
    Taking $\zeta=\eta^\bot\ncap\alpha+\xi$, we get $\zeta/\xi=\alpha_1$. Further,
    $\zeta^\bot=(\eta+\alpha^\bot)\ncap\xi^\bot=\eta+\alpha^\bot\ncap\xi^\bot$. Here we use the assumption $\eta\subset\xi^\bot$ and the modular law in $\Gr(V^*)$. Hence
    $\zeta^\bot/\eta=\alpha_2$. Reference to  \eqref{eqn:factor-compl} completes the proof.
   \end{proof}

From here the reader who looks for a quick way can go straight to
Remark~\ref{rem:canonical-numerical}
in  \S~\ref{ssec:lalg} 
that links  Proposition~\ref{prop:factor-isom} to
Theorem~\ref{thm:InnProd}.

Proposition~\ref{prop:factor-isom}
will be cited one more time in 
\S~\ref{ssec:QCSS-chain}.

In the next subsection we give an equivalent treatment of the above construction in terms of a symplectic space, where the space $V$ and its dual $V^*$ become Lagrangian subspaces. It will not be referred to later on, but in some circumstances the ``symplectic'' view may turn preferential.

\subsection{The symplectic picture}
\label{ssec:iso-symp}
Let us introduce the skew-symmetric inner product on the space $W=V\oplus V^*$ by
$$
 (x_1\oplus y_1, x_2\oplus y_2)=\langle x_1,y_2\rangle-
 \langle x_2,y_1\rangle,
$$
where $x_1,x_2\in V$ and $y_1,y_2\in V^*$.
Thus $W$ becomes a symplectic space. 
We can and will consider $V$ and $V^*$ as its complementary Lagrangian subspaces.
It is said that $W= V\oplus V^*$ is a Lagrangian decomposition \cite[p.~24]{MaslovBook}. Conversely, a pair of complementary Lagrangian subspaces in a symplectic space can be interpreted as a pair of mutually dual vector spaces.

The operation of taking an orthogonal subspace in $W$ will be denoted by the symbol ${}^\#$ to distinguish it from the operation ${}^\bot$ of taking an annihilator in the dual space. 
For instance, $V^\#=V$ and $( V^*)^\#=V^*$.

Now we reproduce the construction of \S~\ref{ssec:iso-dual} within the symplectic space~$W$.

We begin with isomorphism $\phi$.
Given the subspaces
$\xi\subset V$
and $\eta\subset V^*$ such that $\xi\subset\eta^\bot$, 
consider the isotropic subspace 
$I_{\xi,\eta}=\xi\oplus\eta$ of $W$. 
We have
$$
 I_{\xi,\eta}^\#=
 \xi^\#\eta^\#
 =
 \eta^\bot\oplus\xi^\bot.
$$

The subspace $I_{\xi,\eta}$ is the radical (kernel) of the space $I_{\xi,\eta}^\#$
(with inner product induced from $W$).
Factorization yields the space $I_{\xi,\eta}^\#/I_{\xi,\eta}$
with nondegenerate symplectic product
\cite[Ch.III,\S~3]{Artin1958}.
The symplectic decomposition
$I_{\xi,\eta}^\#/I_{\xi,\eta}\cong
(\eta^\bot/\xi)\oplus(\xi^\bot/\eta)$, corresponds to the isomorphism $\phi$.

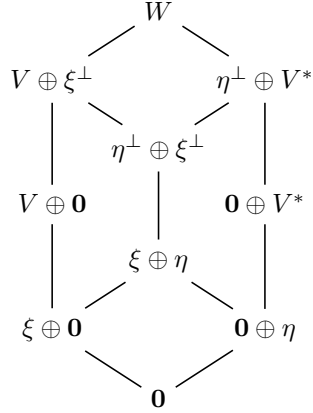
\begin{figure}
    \centering
    \begin{tikzpicture}[-,node distance=2cm, semithick, auto]
    \newcommand{\dy}{0.5cm}
      \tikzstyle{state}=[fill=white,text=black,yshift=\dy]
      \node[state] (W)                              {$W$};
      \node[state] (xib)  [below left of = W]       {$V\oplus\xi^\bot$};
      \node[state] (etab) [below right of = W]      {$\eta^\bot\oplus V^*$};
      \node[state] (etabxxib) [below right of = xib] {$\eta^\bot\oplus\xi^\bot$};
      \node[state] (xixeta) [below of = etabxxib]   {$\xi\oplus \eta$};
      \node[state] (xi)   [below left of = xixeta]  {$\xi\oplus \mathbf{0}$};
      \node[state] (eta)  [below right of = xixeta] {$\mathbf{0}\oplus\eta$};
      \node[state,yshift=-\dy] (v)    at ($(xi)!0.5!(xib)$)     {$V\oplus \mathbf{0}$};
      \node[state,yshift=-\dy] (v')   at ($(etab)!0.5!(eta)$)   {$\mathbf{0}\oplus V^*$};
      \node[state] (zero) [below right of = xi]     {$\mathbf{0}$};
      \path[draw] (W)--(xib)--(v)--(xi)--(zero);
      \path[draw] (W)--(etab)--(v')--(eta)--(zero);
      \path[draw] (xib)--(etabxxib)--(xixeta) -- (xi);
      \path[draw] (etab)--(etabxxib);
      \path[draw] (xixeta) -- (eta);
    \end{tikzpicture}
    \caption{A lattice obtained from two complementary Lagrangian subspaces in a symplectic space, their two subspaces and respective annihilators}
    \label{fig:symplectic-lattice}
\end{figure}

Figure~\ref{fig:symplectic-lattice} gives the first hint at 
the underlying combinatorics of subspaces. It shows the lattice of subspaces of the space $W$ that have appeared so far.   

\smallskip
To carry out the construction of 
Proposition~\ref{prop:factor-isom} in the symplectic context, we will need two lemmas.
The proofs are left as an exercise.

\begin{lemma}
\label{lem:lagr-decomp}    
To any subspace
$\zeta\subset V$ there corresponds the Lagrangian subspace 
$
 \Lambda_\zeta=\zeta\oplus\zeta^\bot
$
in $W=V\oplus V^*$.
Conversely, if for some $\zeta\subset V$ and $\zeta'\subset V^*$ the subspace
$\zeta\oplus\zeta'\subset W$ is Lagrangian, then $\zeta^\bot=\zeta'$.
\end{lemma}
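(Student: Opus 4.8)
The plan is to characterize the Lagrangian property through the self-orthogonality identity $L = L^\#$, and to reduce both directions to the annihilator computation $(\xi\oplus\eta)^\# = \eta^\bot\oplus\xi^\bot$ already recorded in the text immediately above (the case $I_{\xi,\eta}^\# = \eta^\bot\oplus\xi^\bot$). With that identity in hand, neither direction requires more than substitution and projection.

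First I would establish the forward direction. To see that $\Lambda_\zeta=\zeta\oplus\zeta^\bot$ is isotropic, I take two elements $x_1\oplus y_1$ and $x_2\oplus y_2$ of $\Lambda_\zeta$, so that $x_1,x_2\in\zeta$ and $y_1,y_2\in\zeta^\bot$, and compute
$$(x_1\oplus y_1,\, x_2\oplus y_2)=\langle x_1,y_2\rangle-\langle x_2,y_1\rangle=0,$$
each coupling vanishing by the very definition of $\zeta^\bot$. Hence $\Lambda_\zeta\subseteq\Lambda_\zeta^\#$. For the reverse inclusion I apply the formula $(\xi\oplus\eta)^\#=\eta^\bot\oplus\xi^\bot$ with $\xi=\zeta$ and $\eta=\zeta^\bot$, obtaining $\Lambda_\zeta^\#=(\zeta^\bot)^\bot\oplus\zeta^\bot$. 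Since $(\zeta^\bot)^\bot=\zeta$, this gives $\Lambda_\zeta^\#=\Lambda_\zeta$, so $\Lambda_\zeta$ is Lagrangian.

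For the converse, suppose $\zeta\oplus\zeta'\subset W$ is Lagrangian, i.e.\ $(\zeta\oplus\zeta')^\#=\zeta\oplus\zeta'$. The same annihilator formula, now with $\xi=\zeta$ and $\eta=\zeta'$, yields $(\zeta\oplus\zeta')^\#=(\zeta')^\bot\oplus\zeta^\bot$. Equating the two descriptions of this one subspace of $W=V\oplus V^*$ and projecting onto the second summand $V^*$ along $V$ (the projections respect the decomposition, since $\zeta,(\zeta')^\bot\subseteq V$ and $\zeta',\zeta^\bot\subseteq V^*$) gives $\zeta'=\zeta^\bot$, as claimed. Projecting onto $V$ instead would give the complementary identity $\zeta=(\zeta')^\bot$, but it is not needed here.

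The computations are entirely routine; the only point requiring attention is the use of reflexivity $(\zeta^\bot)^\bot=\zeta$ in the forward direction, which holds under the standing finite-dimensionality assumption on $V$ (in general one has only $\zeta\subseteq(\zeta^\bot)^\bot$, forcing $\Lambda_\zeta\subseteq\Lambda_\zeta^\#$ but not necessarily equality). I expect this to be the sole potential obstacle, and it is harmless in the present setting, where the quotient $I_{\xi,\eta}^\#/I_{\xi,\eta}$ is required to carry a nondegenerate symplectic form.
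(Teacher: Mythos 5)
Your proof is correct, and it follows exactly the route the paper intends: the paper leaves this lemma as an exercise, and the natural solution is precisely your reduction of both directions to the identity $(\xi\oplus\eta)^\#=\eta^\bot\oplus\xi^\bot$ displayed just before the lemma, together with the direct check of isotropy and the projection argument onto the two summands of $W=V\oplus V^*$. One cosmetic remark: for the algebraic dual, reflexivity $(\zeta^\bot)^\bot=\zeta$ actually holds for every subspace $\zeta\subset V$ in arbitrary dimension (only the dual statement for subspaces of $V^*$ can fail), so the caveat in your closing paragraph is even milder than you suggest; in the paper's finite-dimensional setting it is, as you say, harmless.
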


\begin{lemma}
\label{lagr-factor}
{\rm (Cf.~\cite[Lemma~1.4.38(a)]{MaslovBook}.)}
If $\Lambda$ is a Lagrangian subspace and $\alpha\subset\Lambda\subset\alpha^\#$, 
then $\Lambda/\alpha$ is a Lagrangian subspace in $\alpha^\#/\alpha$.
\end{lemma}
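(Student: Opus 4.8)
The plan is to prove Lemma~\ref{lagr-factor} through symplectic reduction, using the characterization that a subspace is Lagrangian precisely when it is self-orthogonal, i.e.\ $\Lambda=\Lambda^\#$. Working with self-orthogonality rather than with a dimension count keeps the argument robust and frees it from finite-dimensionality assumptions. First I would record two preliminary observations. Since $\alpha\subset\Lambda$ and $\Lambda$ is isotropic, we have $\alpha\subset\Lambda=\Lambda^\#\subset\alpha^\#$, so $\alpha$ is itself isotropic and is the radical of $\alpha^\#$; in fact the hypothesis $\Lambda\subset\alpha^\#$ is automatic, obtained by applying the inclusion-reversing operation ${}^\#$ to $\alpha\subset\Lambda$. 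As noted already for $I_{\xi,\eta}$ (the radical/factorization discussion and \cite[Ch.III,\S~3]{Artin1958}), the quotient $\alpha^\#/\alpha$ then carries a well-defined nondegenerate induced symplectic form $(\hclass{x},\hclass{y})=(x,y)$ for $x,y\in\alpha^\#$. Because $\alpha\subset\Lambda\subset\alpha^\#$, the image $\Lambda/\alpha$ is a genuine subspace of $\alpha^\#/\alpha$, so the statement even makes sense.

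Next I would verify the two defining properties. Isotropy of $\Lambda/\alpha$ is immediate: for $x,y\in\Lambda$ the induced form gives $(\hclass{x},\hclass{y})=(x,y)=0$ since $\Lambda$ is isotropic. For the Lagrangian property it remains to show that $\Lambda/\alpha$ coincides with its own orthogonal complement \emph{taken inside} $\alpha^\#/\alpha$. Here the key computation is that for any $S$ with $\alpha\subset S\subset\alpha^\#$, the orthogonal complement of $S/\alpha$ in $\alpha^\#/\alpha$ equals $(S^\#\cap\alpha^\#)/\alpha$: a class $\hclass{x}$ with $x\in\alpha^\#$ is orthogonal to all of $S/\alpha$ iff $(x,y)=0$ for every $y\in S$, i.e.\ iff $x\in S^\#$. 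Substituting $S=\Lambda$ and using $\Lambda^\#=\Lambda$ together with $\Lambda\subset\alpha^\#$, the complement becomes $(\Lambda\cap\alpha^\#)/\alpha=\Lambda/\alpha$. Hence $\Lambda/\alpha$ is self-orthogonal in $\alpha^\#/\alpha$, which is exactly what ``Lagrangian'' means.

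I do not expect a genuine obstacle; the content of the lemma is really just symplectic reduction, and the whole thing collapses to one line once the two orthogonality operations are distinguished. The only point demanding care is precisely this bookkeeping: the complement ${}^\#$ is computed in $W$, while self-orthogonality of $\Lambda/\alpha$ is asserted in the reduced space $\alpha^\#/\alpha$, and one must check that the induced form is independent of representatives --- which rests on $\alpha$ being the radical of $\alpha^\#$, so that $(a,y)=0$ whenever $a\in\alpha$, $y\in\alpha^\#$. In the finite-dimensional case one could alternatively bypass the complement identity and simply combine isotropy with the dimension match $\dim(\Lambda/\alpha)=\dim\Lambda-\dim\alpha=\tfrac12\dim(\alpha^\#/\alpha)$, but I prefer the self-orthogonality route since it avoids dimension altogether and mirrors the radical construction used earlier in \S~\ref{ssec:iso-symp}.
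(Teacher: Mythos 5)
Your proof is correct, but there is nothing in the paper to compare it against: the authors explicitly leave the proofs of Lemmas~\ref{lem:lagr-decomp} and~\ref{lagr-factor} as an exercise, giving only the pointer to \cite[Lemma~1.4.38(a)]{MaslovBook}. Your argument is the natural one: isotropy descends trivially, and the complement identity $(S/\alpha)^{\#}=(S^{\#}\cap\alpha^{\#})/\alpha$ for $\alpha\subset S\subset\alpha^{\#}$, specialized at $S=\Lambda$ with $\Lambda^{\#}=\Lambda$ and $\Lambda\subset\alpha^{\#}$, gives self-orthogonality of $\Lambda/\alpha$ in the reduced space. It is worth noting that the lemma is also an immediate special case of machinery the paper develops a page later: taking $\sigma=\alpha$ and $\beta=\Lambda$ in Proposition~\ref{prop:lagr-isom} yields $q_\alpha(\Lambda)=(\alpha+\alpha^{\#}\Lambda)/\alpha=\Lambda/\alpha$ and hence $(\Lambda/\alpha)^{\#}=q_\alpha(\Lambda^{\#})=\Lambda/\alpha$; your key identity is precisely the elementwise counterpart of Lemma~\ref{lem:lagr-iso}(b), which the paper proves via the modular law rather than by chasing elements. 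Two small caveats on your commentary: the observation that $\Lambda\subset\alpha^{\#}$ is automatic from $\alpha\subset\Lambda$ is right and slightly sharpens the statement; but the claim that the argument is free of finite-dimensionality should be tempered --- well-definedness of the induced form uses only the definition of $\alpha^{\#}$ (not the radical property), while \emph{nondegeneracy} of that form requires $\alpha^{\#}\cap(\alpha^{\#})^{\#}=\alpha$, and it is exactly there that reflexivity, i.e.\ finite dimension, enters. Self-orthogonality of $\Lambda/\alpha$ survives without it, but the assertion that $\alpha^{\#}/\alpha$ is a symplectic space does not.
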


\begin{construction}
\label{cons:sympl}
Let us take an arbitrary subspace $\alpha\subset V$.
Consider the Lagrangian subspace 
$
 \Lambda_\alpha=
 \alpha\oplus\alpha^\bot
$
corresponding to $\alpha$ 
as in the first part of Lemma~\ref{lem:lagr-decomp}.
Consider the subspace $\zeta\subset V$
and its orthogonal complement 
$\zeta^\bot\subset V^*$ as defined below:
$$
\zeta=\xi+\eta^\bot\ncap\alpha,
\quad
\zeta^\bot=\eta+\xi^\bot\ncap\alpha^\bot.
$$
The subspace $\Lambda_\zeta=\zeta\oplus\zeta^\bot$ is again Lagrangian (in $W$).
By Lemma~\ref{lagr-factor}, the image 
$\Lambda_{\zeta}/I_{\xi,\eta}$
of $\Lambda_\zeta$ in the symplectic factorspace
$I_{\xi,\eta}^\#/I_{\xi,\eta}$
is 
a Lagrangian subspace, too. It admits the decomposition
$\Lambda_{\zeta}/I_{\xi,\eta}\cong (\zeta/\xi)\oplus(\zeta^\bot/\eta)$.
Therefore,
by the second part of Lemma~\ref{lem:lagr-decomp},
the orthogonality relation
$(\zeta/\xi)^\bot=\zeta^\bot/\eta$ holds --- the same as in Proposition~\ref{prop:factor-isom}.
\hfill{$\Box$}
\end{construction}

In the final part of this section we want to present a symplectic analog of Proposition~\ref{prop:factor-isom} that has a particularly elegant form, see \eqref{quotient-factor} below. The idea is to avoid an explicit reference to the Lagrangian decomposition $W=V\oplus V^*$ and instead to regard  as the primary data the isotropic subspace $I_{\xi,\eta}$ (which will become $\sigma$ below) and the Lagrangian subspace $\Lambda_\alpha$ (which will become $\beta$ and may not even be Lagrangian).

Let us fix an arbitrary isotropic subspace $\sigma$ of $W$. Define the map
$$
 h_\sigma:\;\Gr(W)\to\Gr(W),
\quad
 h_\sigma(\beta)=\sigma+\sigma^\#\beta.
$$
Its main properties are summarized in the next lemma.

\begin{lemma}
\label{lem:lagr-iso}
(a) For any $\beta$, we have $\sigma\subset h_\sigma(\beta)\subset\sigma^\#$. 

(b) The map $h_\sigma$ commutes with involution ${}^\#$, that is, $(h_\sigma(\beta))^\#=h_\sigma(\beta^\#)$.
In particular, if $\beta$ is isotropic (resp., Lagrangian), then so is
$h_\sigma(\beta)$.
\end{lemma}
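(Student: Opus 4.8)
The plan is to prove parts (a) and (b) directly from the definition $h_\sigma(\beta)=\sigma+\sigma^\#\beta$, using only the standard properties of the orthogonality involution ${}^\#$ on the symplectic space $W$: it reverses inclusions, it is involutive ($\gamma^{\#\#}=\gamma$), it sends sums to meets and meets to sums (de Morgan laws, $(\gamma+\delta)^\#=\gamma^\#\delta^\#$ and $(\gamma\delta)^\#=\gamma^\#+\delta^\#$), and for an isotropic $\sigma$ we have $\sigma\subset\sigma^\#$.

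For part (a), the lower inclusion $\sigma\subset h_\sigma(\beta)$ is immediate since $h_\sigma(\beta)$ is a sum one of whose summands is $\sigma$. For the upper inclusion, I would observe that $\sigma^\#\beta\subset\sigma^\#$ trivially, and that $\sigma\subset\sigma^\#$ by isotropy of $\sigma$; hence both summands of $h_\sigma(\beta)$ lie in $\sigma^\#$, so their join does too. This is a short two-line argument.

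For part (b) I would compute $(h_\sigma(\beta))^\#$ by applying the de Morgan laws. We have
$$
(h_\sigma(\beta))^\# = (\sigma+\sigma^\#\beta)^\# = \sigma^\#(\sigma^\#\beta)^\# = \sigma^\#(\sigma^{\#\#}+\beta^\#) = \sigma^\#(\sigma+\beta^\#).
$$
Now I want this to equal $h_\sigma(\beta^\#)=\sigma+\sigma^\#\beta^\#$. Since $\sigma\subset\sigma^\#$, the modular law applies with $\sigma\subset\sigma^\#$, giving $\sigma^\#(\sigma+\beta^\#)=\sigma+\sigma^\#\beta^\#$, which is exactly $h_\sigma(\beta^\#)$. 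The final clause follows by specializing: if $\beta$ is isotropic, $\beta\subset\beta^\#$, and applying $h_\sigma$ (which is order-preserving, as a quick check of the formula confirms) together with (b) gives $h_\sigma(\beta)\subset h_\sigma(\beta^\#)=(h_\sigma(\beta))^\#$, so $h_\sigma(\beta)$ is isotropic; the Lagrangian case is the equality case, where $\beta=\beta^\#$ forces $h_\sigma(\beta)=(h_\sigma(\beta))^\#$.

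The one genuine subtlety — the step I would flag as the crux rather than a routine manipulation — is the use of the modular law to convert $\sigma^\#(\sigma+\beta^\#)$ into $\sigma+\sigma^\#\beta^\#$. This is precisely where the hypothesis that $\sigma$ is \emph{isotropic} (so $\sigma\subset\sigma^\#$) enters in an essential way; without the inclusion $\sigma\subset\sigma^\#$ one cannot invoke modularity and the identity $(h_\sigma(\beta))^\#=h_\sigma(\beta^\#)$ would fail. Everything else is bookkeeping with the de Morgan laws, which hold in any orthogonal/symplectic space.
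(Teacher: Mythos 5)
Your proof is correct and takes essentially the same route as the paper's: the paper's argument for (b) is exactly your chain $(\sigma+\sigma^\#\beta)^\#=\sigma^\#(\sigma^\#\beta)^\#=\sigma^\#(\sigma+\beta^\#)=\sigma+\sigma^\#\beta^\#$, i.e.\ de Morgan followed by the modular law under the inclusion $\sigma\subset\sigma^\#$, and you correctly identify that last step as the crux. You additionally write out part (a) and the isotropic/Lagrangian consequences, which the paper dismisses as obvious; these additions are fine.
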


\begin{proof}
The property (a) is obvious. For (b) we have
 $$
  (\sigma+\sigma^\#\ncap\beta)^\#=\sigma^\#\ncap(\sigma^\#\ncap\beta)^\#=\sigma^\#\ncap(\sigma+\beta^\#)
  =\sigma+\sigma^\#\ncap\beta^\#.
 $$
 Here, as in the proof of Proposition~\ref{prop:factor-isom}, the last step relies on
 the modular law and the inclusion $\sigma\subset \sigma^\#$.
\end{proof}

Due to the property (a), we have the well-defined map 
$$
 q_\sigma:\;\Gr(W)\to\Gr(\sigma^\#/\sigma),
\quad
 q_\sigma(\beta)=h_\sigma(\beta)/\sigma.
$$

The space $\sigma^\#/\sigma$ is the maximal subspace
of $W/\sigma$ in which the inner product induced from $W$
by the natural formula $(x+\sigma, y+\sigma)=( x,y)$
is correctly defined. The induced inner product is nondegenerate, so the operation of taking the orthogonal subspace is an involution in $\Gr(\sigma^\#/\sigma)$. We denote it
by the same symbol ${}^\#$ as in $W$.

\begin{proposition}
\label{prop:lagr-isom}
For any subspace $\beta\in\Gr(W)$ we have
\begin{equation}
\label{quotient-factor}
 (q_\sigma(\beta))^\#=q_\sigma(\beta^\#),
\end{equation}
where ${}^\#$ on the left pertains to $\sigma^\#/\sigma$
and on the right --- to $W$.
\end{proposition}

\begin{proof}
It follows at once from Lemma~\ref{lem:lagr-iso}. \end{proof}

Construction~\ref{cons:sympl} relates to this Proposition via the correspondences
$I_{\xi,\eta}\mapsto\sigma$, $\Lambda_\alpha\mapsto\beta$, 
 $\Lambda_\zeta\mapsto h_\sigma(\beta)$, and, finally,
$\Lambda_\zeta/I_{\xi,\eta}\mapsto q_\sigma(\beta)$.

\begin{remark}
The map $h_\sigma$ preserves the order:
$\beta_1\subset\beta_2\;\Rightarrow\;h_\sigma(\beta_1)\subset h_\sigma(\beta_2)$, but it is not a lattice homomorphism: in general,
$h_\sigma(\beta\gamma)\neq h_\sigma(\beta)h_\sigma(\gamma)$. The same applies to the map $q_\sigma$.
\end{remark}

\begin{exercise}
\label{ex:properties-h}
This exercise exhibits some further properties of
the map $h_\sigma$ in addition to those listed in Lemma~\ref{lem:lagr-iso}.

(a) The map $h_\sigma$ is idempotent.

(b) $h_\sigma(\beta)\, \beta=\sigma^\#\beta$.

(c) The image under $h_\sigma$ of the set of all Lagrangian subspaces in $W$ is the set of all 
Lagrangian extensions of the isotropic subspace $\sigma$.
\end{exercise}

\begin{exercise}
It is instructive to compare the property (b) in Exercise~\ref{ex:properties-h} with Lemma~1.4.39 in \cite{MaslovBook}, which asserts that, given a Lagrangian subspace $\lambda$, there exists a Lagrangian subspace $ \lambda'$ such that $ \lambda'\ncap \lambda=\sigma\ncap \lambda$.
Derive this result from Ex.~\ref{ex:properties-h}(b). Hint: consider a Lagrangian subspace 
$\lambda^*$
complementary to $\lambda$ (that is, $\lambda\ncap\lambda^*=0$) and the isotropic
subspace
$\sigma'=(\sigma^\#+\lambda)\lambda^*$. 
Put
$\lambda'=h_{\sigma'}(\lambda)$. 
\end{exercise}

\section{Quasi-complementations and the general rank identity}
\label{sec:genrk}

By the general rank identity we understand the identity of Theorem~\ref{thm:Agraded-2lat} and any of its specializations
in Theorems~\ref{thm:Agraded-2}, \ref{thm:3subspaces}, and \ref{thm:InnProd}.

\subsection{Combinatorial formulation}
\label{ssec:comb}

Let $L$ be a lattice with $\bot$ and $\top$ its minimum and maximum elements, respectively.
The partial order is denoted by $\leq$ as usual; the join and meet operations
are $\join$ and $\meet$,
respectively; the symbol $\land$ is usually omitted.

\begin{definition} 
\label{def:grlat}
(A group-graded lattice.)
Let $L$ be a lattice and $G$ an Abelian group with operation $\opg$. 
A {\em $G$-grading} 
on $L$ is a map $L\to G$, $x\mapsto|x|$, satisfying the condition
\begin{equation}
\label{gmodular}
    |x\join y|\opg |xy|=|x|\opg|y| \text{ for all $x,y\in L$}.
\end{equation}    
A lattice $L$ with a given $G$-grading will be called a {\em $G$-modular}\ lattice.
\end{definition}

If there is no confusion as to which $G$-grading on $L$ is being used in the current context, then
we will refer to $|x|$ as to the rank of the element $x\in L$. 
If $L$ is a modular lattice (in the usual sense), then its (usual) rank function is a $\ZZ$-grading.
In this situation, we will use the additive notation.
(However, there exist other $\ZZ$-gradings.)

\begin{exercise} 
All $G$-gradings on $L$ constitute an Abelian group $\mathrm{Grad}(L,G)$
--- a subgroup of the group $G^L$ of all $G$-valued functions on $L$. 

The $G$-gradings with property $|\bot|=1_G$ constitute a subgroup $\mathrm{Grad}_0(L,G)$ and $\mathrm{Grad}(L,G)/\mathrm{Grad}_0(L,G)\cong G$. 
\end{exercise} 

\begin{notation} 
\label{not:gmodquot}
Let $L$ be a $G$-modular lattice. For any $x,y,z\in L$ denote
$$
 [y/x]=|y|\opg|x|^{-1}
$$
and
$$
\zdim{y}{x}{z}=[yz/xz].
$$

Given two $G$-graded lattices $L_1$ and $L_2$, we will 
write $[y/x]_i$ and $\zdim{y}{x}{z}_i$ whenever $x,y,z\in L_i$  $(i=1,2)$. 
\end{notation}

\begin{definition} 
\label{def:antiisolat}
A bijection $\dag:L_1\to L_2$ between two $G$-graded lattices is an {\em anti-isomorphism}\ if
it reverses the order ($x\leq_1 y\Rightarrow y^\dag\leq_2 x^\dag$) and 
\begin{equation}
\label{commonproduct}
|x|_1\opg|x^\dag|_2=|y|_1\opg|y^\dag|_2.
\end{equation}
for any $x,y\in L_1$.
\end{definition}

The inverse of the anti-isomorphism is an anti-isomorphism mapping $L_2$ onto $L_1$.
We will denote it by the same symbol ${}^\dag$ (thus we extend the map ${}^\dag$ to a bijection
of the set $L_1\sqcup L_2$ onto itself).

\begin{exercise} (Cf.~\cite[Lemma 83]{Rotman1998}.)
\label{ex:antiisolat}
If ${}^\dag:L_1\to L_2$  is an order-reversing bijection, then for any $x,y\in L_1$
\begin{equation}
\label{jdagm}
(x\join y)^\dag = x^\dag\meet y^\dag.
\end{equation}
\end{exercise}

\begin{theorem}
\label{thm:Agraded-2lat}
Let $L_1$, $L_2$ be two $G$-graded lattices and ${}^\dag:L_1\to L_2$ be an anti-isomorphism.
If
$\xi\in L_1$, $\eta\in L_2$, 
then for any $\alpha\in L_1$ the identity
$$
\zdim{\eta^\dag}{\xi}{\alpha}_1 \opg \zdim{\xi^\dag}{\eta}{\alpha^\dag}_2 = 
[\eta^\dag/\xi]_1 = [\xi^\dag/\eta]_2
$$
holds.
\end{theorem}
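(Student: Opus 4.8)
The plan is to transport the whole identity into the single lattice $L_1$ and then reduce it to the grading law \eqref{gmodular} applied twice. The key preliminary observation is that the anti-isomorphism condition \eqref{commonproduct} says precisely that $|x|_1\opg|x^\dag|_2$ takes a common value, say $c$, independent of $x$; since ${}^\dag$ is an involution on $L_1\sqcup L_2$, this yields $|w|_2=c\opg|w^\dag|_1^{-1}$ for every $w\in L_2$. Because the bracket $[\,\cdot/\cdot\,]$ is a ratio, the constant $c$ cancels, and I obtain the \emph{transfer rule} $[a/b]_2=[b^\dag/a^\dag]_1$ for all $a,b\in L_2$. Applying this with $a=\xi^\dag$, $b=\eta$ and using $\xi^{\dag\dag}=\xi$ already gives the right-hand equality $[\xi^\dag/\eta]_2=[\eta^\dag/\xi]_1$.

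Next I would rewrite the second factor on the left-hand side. By Notation~\ref{not:gmodquot}, $\zdim{\xi^\dag}{\eta}{\alpha^\dag}_2=[(\xi^\dag\alpha^\dag)/(\eta\alpha^\dag)]_2$; applying the transfer rule and then Exercise~\ref{ex:antiisolat} (which, together with its dual obtained from the inverse anti-isomorphism, swaps $\meet$ and $\join$ under ${}^\dag$) turns this into $[(\eta\alpha^\dag)^\dag/(\xi^\dag\alpha^\dag)^\dag]_1=[(\eta^\dag\join\alpha)/(\xi\join\alpha)]_1$, where I use $\eta^{\dag\dag}=\eta$ and $\xi^{\dag\dag}=\xi$. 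At this point the entire left-hand side lives in $L_1$, and writing $\mu=\eta^\dag$, $\nu=\xi$ the claim reduces to the purely lattice-internal identity $[(\mu\alpha)/(\nu\alpha)]_1\opg[(\mu\join\alpha)/(\nu\join\alpha)]_1=[\mu/\nu]_1$.

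Finally I would dispatch this reduced identity by the grading law. Applying \eqref{gmodular} to the pair $(\mu,\alpha)$ gives $|\mu\alpha|_1\opg|\mu\join\alpha|_1=|\mu|_1\opg|\alpha|_1$, and to the pair $(\nu,\alpha)$ gives $|\nu\alpha|_1\opg|\nu\join\alpha|_1=|\nu|_1\opg|\alpha|_1$. Dividing the first relation by the second, the $|\alpha|_1$ factors cancel and the left-hand side collapses to $|\mu|_1\opg|\nu|_1^{-1}=[\mu/\nu]_1$, as required.

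I do not anticipate a genuine obstacle: the entire content is the grading law applied twice. The one place demanding care is the bookkeeping of the involution and the $\meet$/$\join$ swaps when passing from $L_2$ to $L_1$ --- in particular, verifying that the constant $c$ from \eqref{commonproduct} cancels in every bracket, so that no normalization such as $|\bot|=1_G$ is needed. Once the transfer rule $[a/b]_2=[b^\dag/a^\dag]_1$ is isolated, both the displayed equality and the main identity follow mechanically.
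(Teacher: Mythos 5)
Your proof is correct and is essentially the paper's own argument: the paper likewise proves the identity by transferring one factor between the two lattices via \eqref{commonproduct} together with Exercise~\ref{ex:antiisolat} (exactly your ``transfer rule'' $[a/b]_2=[b^\dag/a^\dag]_1$), and reduces what remains to two applications of the grading law \eqref{gmodular} inside $L_1$. The only difference is the direction of bookkeeping --- the paper pushes the term $[\alpha\join\eta^\dag/\alpha\join\xi]_1$ from $L_1$ into $L_2$ to recognize it as $\zdim{\xi^\dag}{\eta}{\alpha^\dag}_2$, while you pull that $L_2$ factor back into $L_1$; this is the same computation read in the opposite order.
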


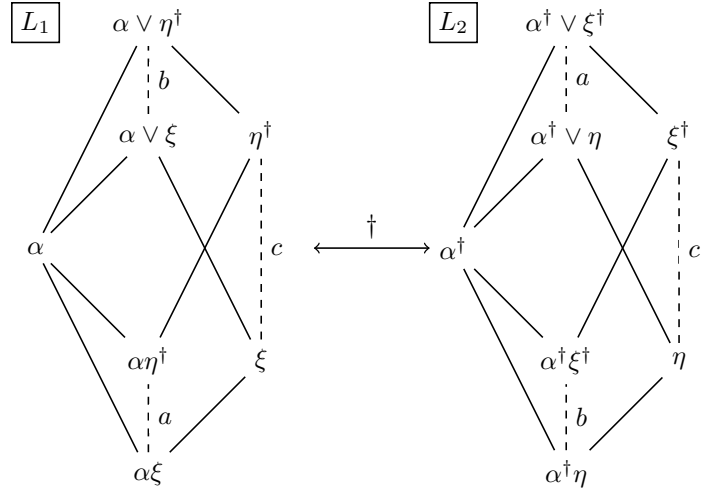
\begin{figure}[hbt]
    \centering
    \begin{tikzpicture}[-,auto,node distance=1.5cm,
                        semithick]
      \tikzstyle{state}=[fill=white,text=black]
      \node[state]        (a)                    {$\alpha$};
      \node[state]        (c) [right of = a] {};
      \node[state]         (avxi) [above of = c] {$\alpha\vee \xi$};
      \node[state]         (avn) [above of = avxi] {$\alpha\vee\eta^\dag$};
      \node[state]         (n) [right of = avxi] {$\eta^\dag$};
      \node[state,below of = c]         (an)       {$\alpha\eta^\dag$};
      \node[state]         (axi) [below of = an]       {$\alpha\xi$};
      \node[state]         (xi) [right of = an]       {$\xi$};
    
      \path[draw,dashed] (axi)--node[right,fill=white] {$a$}(an);
      \path[draw] (an)--(a)--(avxi); 
      \path[draw,dashed] (avxi)--node[right,fill=white] {$b$}(avn);
      \path[draw] (axi)--(a)--(avn);
      \path[draw] (axi)--(xi)--(avxi);
      \path[draw] (an)--(n)--(avn);
      \path[draw,dashed] (xi)--node[right,fill=white] {$c$}(n);
      \node[state,right=5cm of a]   (a')                    {$\alpha^\dag$};
      \node[state]        (c') [right of = a'] {};
      \node[state]         (avxi') [above of = c'] {$\alpha^\dag\vee \eta$};
      \node[state]         (avn') [above of = avxi'] {$\alpha^\dag\vee\xi^\dag$};
      \node[state]         (n') [right of = avxi'] {$\xi^\dag$};
      \node[state,below of = c']         (an')       {$\alpha^\dag\xi^\dag$};
      \node[state]         (axi') [below of = an']       {$\alpha^\dag\eta$};
      \node[state]         (xi') [right of = an']       {$\eta$};
    
      \path[draw,dashed] (axi')--node[right,fill=white] {$b$}(an');
      \path[draw] (an')--(a')--(avxi'); 
      \path[draw,dashed] (avxi')--node[right,fill=white] {$a$}(avn');
      \path[draw] (axi')--(a')--(avn');
      \path[draw] (axi')--(xi')--(avxi');
      \path[draw] (an')--(n')--(avn');
      \path[draw,dashed] (xi')--node[right,fill=white] {$c$}(n');
      \coordinate[right = 0.7cm of a -| xi] (ra);
      \path[<->,draw] (ra) -- node{$\dag$}(a');
      \node[draw, left of =avn]{$L_1$};
      \node[draw, left of =avn']{$L_2$};
    \end{tikzpicture}
    \caption{To the proof of Theorem~\ref{thm:Agraded-2lat}. Solid lines connect comparable elements in the lattices; dashed lines correspond to the quotients of ranks
    used in the proof.}
    \label{fig:2lattices}
\end{figure}

\begin{proof}
Put $\zdim{\eta^\dag}{\xi}{\alpha}_1=a$,
$[\alpha\join\eta^\dag/\alpha\join\xi]_1=b$,
$[\eta^\dag/\xi]_1=c$. 
We write $\frac{x}{y}$ for $x\opg y^{-1}$.
By property \eqref{gmodular} in $L_1$, we have 
$$
a\opg b=\frac{|\alpha\eta^\dag|_1}{|\alpha\xi|_1} \opg \frac{|\alpha\join\eta^\dag|_1}{|\alpha\join\xi|_1}
=\frac{|\alpha|_1\opg|\eta^\dag|_1}{|\alpha|_1\opg|\xi|_1}
=\frac{|\eta^\dag|_1}{|\xi|_1}=c.
$$
By \eqref{commonproduct}, $[\eta^\dag/\xi]=[\xi^\dag/\eta]$.
Finally, by \eqref{commonproduct} and \eqref{jdagm}, 
$$
b=\frac{|\alpha\join\eta^\dag|_1}{|\alpha\join\xi|_1}
=\frac{|(\alpha\join\xi)^\dag|_2}{|(\alpha\join\eta^\dag)^\dag|_2}
=\frac{|\alpha^\dag\xi^\dag|_2}{|\alpha^\dag\eta|_2}=\zdim{\xi^\dag}{\eta}{\alpha^\dag}_2.
$$
The relations constituting the proof are visualized in Figure~\ref{fig:2lattices}.
\end{proof}

\begin{definition} 
\label{def:gmodcomp}
An involutive anti-isomorphism of a $G$-modular lattice $L$ onto itself is called {\em quasi-complementation}.
\end{definition}

\begin{exercise}
\label{ex:modinvrk}
Suppose that $L$ is a modular lattice and $|\cdot|$ its (usual) rank function. Then any order-reversing involution
$\dag$ on $L$ is a quasi-complement\-ation. Specifically, $|x|+|x^\dag|=|\top|$ for any $x\in L$. 
(Cf.\ \eqref{commonproduct}.)
\end{exercise}

\begin{definition} 
\label{def:gmodort}
Two elements $x$ and $y$ of a $G$-modular lattice with quasi-complementation ${}^\dag$ are {\em orthogonal}
(notation: $x\bot y$) if $x\leq y^\dag$, equivalently, $y\leq x^\dag$.
\end{definition}

Note that $x\bot x^\dag$ for any $x$.

Except for \S~\ref{ssec:misc-FA} we will need Theorem~\ref{thm:Agraded-2lat} in the case  $L_1=L_2$. We give an explicit
formulation for this special, the most important case. 

\begin{theorem}
\label{thm:Agraded-2}
Let $L$ be a $G$-modular lattice with quasi-complementation $\dag$.
For any $\xi$, $\eta$ and $\alpha$ in $L$ there holds the identity
$$
\zdim{\eta^\dag}{\xi}{\alpha}\opg\zdim{\xi^\dag}{\eta}{\alpha^\dag} 
=[\eta^\dag/\xi]
= [\xi^\dag/\eta].
$$
In particular, 
$$
\zdim{\xi^\dag}{\xi}{\alpha} \opg \zdim{\xi^\dag}{\xi}{\alpha^\dag} = [\xi^\dag/\xi].
$$
\end{theorem}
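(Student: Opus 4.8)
The plan is to obtain Theorem~\ref{thm:Agraded-2} as a direct specialization of Theorem~\ref{thm:Agraded-2lat}, taking $L_1=L_2=L$ and letting the anti-isomorphism ${}^\dag:L_1\to L_2$ be the given quasi-complementation. First I would check that the hypotheses of Theorem~\ref{thm:Agraded-2lat} are genuinely met in this collapsed setting. By Definition~\ref{def:gmodcomp} a quasi-complementation is an \emph{involutive} anti-isomorphism of $L$ onto itself, so in particular it is an anti-isomorphism $L\to L$ in the sense of Definition~\ref{def:antiisolat}; involutivity further guarantees that the inverse anti-isomorphism $L_2\to L_1$ coincides with ${}^\dag$ itself, so the two maps implicitly invoked in Theorem~\ref{thm:Agraded-2lat} are one and the same. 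Consequently every element $\eta^\dag$, $\xi^\dag$, $\alpha^\dag$ again lies in $L$, and the subscripts $1,2$ attached to $\zdim{\cdot}{\cdot}{\cdot}$ and $[\cdot/\cdot]$ become superfluous.

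With these identifications, the asserted identity $\zdim{\eta^\dag}{\xi}{\alpha}\opg\zdim{\xi^\dag}{\eta}{\alpha^\dag}=[\eta^\dag/\xi]=[\xi^\dag/\eta]$ is word-for-word the conclusion of Theorem~\ref{thm:Agraded-2lat} with the subscripts erased. Thus the first part requires no new computation at all.

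For the displayed special case I would simply set $\eta=\xi$ in the identity just established. The substitution sends $\zdim{\eta^\dag}{\xi}{\alpha}$ to $\zdim{\xi^\dag}{\xi}{\alpha}$, sends $\zdim{\xi^\dag}{\eta}{\alpha^\dag}$ to $\zdim{\xi^\dag}{\xi}{\alpha^\dag}$, and sends the right-hand side $[\eta^\dag/\xi]$ to $[\xi^\dag/\xi]$, yielding $\zdim{\xi^\dag}{\xi}{\alpha}\opg\zdim{\xi^\dag}{\xi}{\alpha^\dag}=[\xi^\dag/\xi]$ as claimed. The secondary equality $[\eta^\dag/\xi]=[\xi^\dag/\eta]$ degenerates here into the trivial $[\xi^\dag/\xi]=[\xi^\dag/\xi]$.

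I do not anticipate any genuine obstacle: the content of Theorem~\ref{thm:Agraded-2} is entirely carried by Theorem~\ref{thm:Agraded-2lat}. The only point deserving attention is the bookkeeping --- confirming that involutivity of ${}^\dag$ lets one safely identify $L_1$ with $L_2$ and the forward anti-isomorphism with its inverse, so that expressions such as $\alpha^\dag$ and $(\alpha^\dag)^\dag=\alpha$ are unambiguous inside the single lattice $L$. If a self-contained argument were preferred, one could instead replay the three-line computation from the proof of Theorem~\ref{thm:Agraded-2lat} verbatim within $L$, using \eqref{gmodular}, \eqref{commonproduct}, and \eqref{jdagm}; but citing the general theorem is cleaner.
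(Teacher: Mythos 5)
Your proposal is correct and follows essentially the same route as the paper: the paper gives no separate proof of Theorem~\ref{thm:Agraded-2}, presenting it precisely as the explicit formulation of the special case $L_1=L_2$ of Theorem~\ref{thm:Agraded-2lat}, with the quasi-complementation (an involutive anti-isomorphism, per Definition~\ref{def:gmodcomp}) playing the role of ${}^\dag$. Your additional bookkeeping (that involutivity makes the forward map and its inverse coincide) and the substitution $\eta=\xi$ for the ``in particular'' clause are exactly the intended, and valid, reading.
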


\subsection{Linear-algebraic formulation}
\label{ssec:lalg}

In this section we 
apply the preceding theory to the modular lattice of subspaces of a vector space.

Let $V$ be a vector space of dimension $n$ over a field $\kk$.
Recall that the Grassmanian $\Gr(V)$
is a modular lattice with rank function $|\lambda|=\dim \lambda$, which takes values $0,1,\dots,n$.

A {\em quasi-complementation}\footnote{In this definition we follow \cite{Kubiak1999}, p.~436.}\ is an order-reversing involution on $\Gr(V)$. That is, ${}^\dag: \Gr(V)\to\Gr(V)$ is a quasi-complementation if 
$(\lambda^\dag)^\dag=\lambda$ and
$
 (\lambda\subsetneq\mu) \,\Rightarrow\,(\lambda^\dag\supsetneq\mu^\dag)
$ for any subspaces $\lambda$ and $\mu$ of $V$.

By Exercise~\ref{ex:modinvrk} the notion of quasi-complementation as defined here agrees with
notion of quasi-complementation in the modular lattice $\Gr(V)$ in the sense of Definition~\ref{def:gmodcomp}.

\begin{notation}
\label{not:linasd}
Given two subspaces $\alpha$ and $\beta$ of $V$, denote
$$
\rho(\alpha,\beta)=|\alpha/(\alpha\beta)|=|(\alpha+\beta)/\beta|. 
$$
\end{notation}

\begin{remark} The symmetrized function 
$d(\alpha,\beta)=\rho(\alpha,\beta)+\rho(\beta,\alpha)=|\alpha+\beta|-|\alpha\beta|$
is the {\em subspace metric}\ widely used in coding theory.
\end{remark}

\begin{theorem}
\label{thm:3subspaces}
Let $\dag$ be a quasi-complementation on $\Gr(V)$ and $\alpha,\xi,\eta\in\Gr(V)$. 

(a) Suppose that $\xi\subset\xi^\dag$.
Then 
$$
\rho(\alpha\xi^\dag,\xi)+\rho(\alpha^\dag\xi^\dag,\xi)=|\xi^\dag|-|\xi|=n-2|\xi|.
$$

(b) 
Suppose that subspaces $\xi$, $\eta$ satisfy $\xi\subset\eta^\dag$, $\eta\subset\xi^\dag$. Then 
$$
\rho(\alpha\eta^\dag,\xi)
+\rho(\alpha^\dag\xi^\dag,\eta)
=
n-|\xi|-|\eta|.
$$
\end{theorem}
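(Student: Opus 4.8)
The plan is to deduce both parts directly from Theorem~\ref{thm:Agraded-2}, applied to the modular lattice $L=\Gr(V)$ with its $\ZZ$-grading $|\lambda|=\dim\lambda$ (so in additive notation $\opg$ becomes $+$ and $|x|^{-1}$ becomes $-|x|$) and the given order-reversing involution $\dag$, which is a quasi-complementation by Exercise~\ref{ex:modinvrk}. First I would note that part (a) is merely the specialization $\eta=\xi$ of part (b): setting $\eta=\xi$ collapses the two hypotheses $\xi\subset\eta^\dag$, $\eta\subset\xi^\dag$ into the single hypothesis $\xi\subset\xi^\dag$, and turns the right-hand side $n-|\xi|-|\eta|$ into $n-2|\xi|=|\xi^\dag|-|\xi|$. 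Hence it suffices to prove (b).

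Next I would unwind the abstract notation of Notation~\ref{not:gmodquot} in $\Gr(V)$. There $[y/x]=\dim y-\dim x$ and, writing $yz$ for $y\cap z$, $\zdim{y}{x}{z}=\dim(y\cap z)-\dim(x\cap z)$. For the right-hand side, Exercise~\ref{ex:modinvrk} supplies the reciprocity $|\lambda|+|\lambda^\dag|=|\top|=n$, whence $[\eta^\dag/\xi]=(n-|\eta|)-|\xi|=n-|\xi|-|\eta|$, exactly the claimed value (and consistently $[\xi^\dag/\eta]$ gives the same). Theorem~\ref{thm:Agraded-2} then reads
\[
\zdim{\eta^\dag}{\xi}{\alpha}+\zdim{\xi^\dag}{\eta}{\alpha^\dag}=n-|\xi|-|\eta|,
\]
so all that remains is to identify the two summands with the subspace quantities $\rho(\alpha\eta^\dag,\xi)$ and $\rho(\alpha^\dag\xi^\dag,\eta)$ of Notation~\ref{not:linasd}.

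This identification is the crux, and it is precisely where the orthogonality hypotheses are used. Expanding $\rho(\alpha\eta^\dag,\xi)=|\alpha\cap\eta^\dag|-|\alpha\cap\eta^\dag\cap\xi|$, I would invoke $\xi\subset\eta^\dag$ to absorb the annihilator: since $\eta^\dag\cap\xi=\xi$, the subtracted term becomes $|\alpha\cap\xi|$, yielding exactly $\zdim{\eta^\dag}{\xi}{\alpha}$. The symmetric computation, using $\eta\subset\xi^\dag$, gives $\rho(\alpha^\dag\xi^\dag,\eta)=\zdim{\xi^\dag}{\eta}{\alpha^\dag}$. Substituting these into the displayed identity proves part (b), and the specialization $\eta=\xi$ proves part (a).

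The main obstacle is not any deep argument but this bookkeeping absorption: one must verify that the three-fold intersections appearing in the definition of $\rho$ collapse to the two-fold intersections appearing in $\zdim{\cdot}{\cdot}{\cdot}$ exactly because of the inclusions $\xi\subset\eta^\dag$ and $\eta\subset\xi^\dag$. Without these assumptions the left-hand summands of Theorem~\ref{thm:Agraded-2} would not match the $\rho$-terms; thus the sole role of the orthogonality hypotheses here is to render the abstract lattice quantities $\zdim{\cdot}{\cdot}{\cdot}$ as honest relative dimensions in the subspace picture.
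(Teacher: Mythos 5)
Your proof is correct and follows essentially the same route as the paper's own: reduce (a) to (b) via $\eta=\xi$, use the inclusion $\xi\subset\eta^\dag$ (resp.\ $\eta\subset\xi^\dag$) to absorb the intersection so that $\rho(\alpha\eta^\dag,\xi)=\zdim{\eta^\dag}{\xi}{\alpha}$ and $\rho(\alpha^\dag\xi^\dag,\eta)=\zdim{\xi^\dag}{\eta}{\alpha^\dag}$, then invoke Theorem~\ref{thm:Agraded-2} together with $|\lambda|+|\lambda^\dag|=n$ from Exercise~\ref{ex:modinvrk}. The paper's proof is just a terser version of exactly this argument.
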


\begin{proof} (a) is a particular case $\xi=\eta$ of (b).
 Since $\eta^\dag\xi=\xi$, we have, comparing Notation~\ref{not:gmodquot} and Notation~\ref{not:linasd},
$$
\rho(\alpha\eta^\dag,\xi)=|\alpha\eta^\dag/\alpha\eta^\dag\xi|
=|\alpha\eta^\dag/\alpha\xi|=\zdim{\eta^\dag}{\xi}{\alpha},
$$
and similarly $\rho(\alpha^\dag \xi^\dag,\eta)=\zdim{\xi^\dag}{\eta}{\alpha^\dag}$.
In view of the equalities $|\xi^\dag|+|\xi|=|\eta^\dag|+|\eta|=n$, the identity in (b) is a specialization of Theorem~\ref{thm:Agraded-2}.
\end{proof}

\begin{exercise}
\label{ex:2subspaces}
Verify the identities 
$
\rho(\alpha,\beta)=\rho(\beta^\dag,\alpha^\dag)
$
and $|\alpha\beta|-|\alpha^\dag\beta^\dag|=|\alpha|-|\beta^\dag|
$
for any $\alpha,\beta\in\Gr(V)$.
Use the latter to prove Theorem~\ref{thm:3subspaces} directly, without reference to Theorem~\ref{thm:Agraded-2}.
\end{exercise}

Quasi-complementations in $\Gr(V)$ most naturally arise from bilinear or 
(for $\kk=\CC$)
sesquilinear forms. Admitting some redundancy in exposition, we explicitly state the main identity in this case.

\smallskip
By an {\em orthogonal space}\ we understand a finite-dimensional vector space $V$ over a field $\kk$ equipped with
``inner product'' $\langle \cdot,\cdot\rangle$: $V\times V\to \kk$ of either
of the following three types:

\begin{itemize}
\item  the form $\langle \cdot,\cdot\rangle$ is bilinear and symmetric;
\item the form $\langle \cdot,\cdot\rangle$ is bilinear and antisymmetric;
\item $\kk=\CC$ and the form $\langle \cdot,\cdot\rangle$ is Hermitian.
\end{itemize}

For 
$\lambda\in V$, its orthogonal complement is $\lambda^\bot=\{x\in V\mid \forall y\in\lambda\;\langle x,y\rangle=0\}$.

Clearly, the orthogonal complement of the zero subspace $\mathbf{0}$ is $V$.
The orthogonal space is {\em nondengenerate}\ if $V^\bot=\mathbf{0}$.
In an nondegenerate orthogonal space, the operation ${}^\bot$ is involutive and thus a quasi-complementation. 

Orthogonality of two subspaces, $\lambda\bot\mu$, means that $\lambda\subset\mu^\bot$; equivalently,
$\mu\subset\lambda^\bot$. This is consistent with lattice-theoretical notion of orthogonality (Definition~\ref{def:gmodort}). 

\begin{theorem}
\label{thm:InnProd}
(a) Let $(V,\langle \cdot,\cdot\rangle)$ be a nondegenerate orthogonal space of dimension $n$. If $\xi,\eta,\alpha\in\Gr(V)$
and $\xi\bot\eta$, then
\begin{equation}
\label{orthospace-sum-dimensions}    
 |(\eta^\bot\alpha+\xi)/\xi|+|(\xi^\bot\alpha^\bot+\eta)/\eta|
 =n-|\eta|-|\xi|.
\end{equation}
(b) In particular, if $\xi$ is an isotropic subspace ($\xi\subset\xi^\bot$), then
$$
|(\xi^\bot\alpha+\xi)/\xi|+|(\xi^\bot\alpha^\bot+\xi)/\xi|
 =|\xi^\bot|-|\xi|=n-2|\xi|.
$$
\end{theorem}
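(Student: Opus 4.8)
The plan is to obtain Theorem~\ref{thm:InnProd} as a direct specialization of Theorem~\ref{thm:3subspaces}, exploiting the fact that on a nondegenerate orthogonal space the map ${}^\bot$ is a quasi-complementation, as already recorded in \S~\ref{ssec:lalg}. First I would note that, by nondegeneracy, ${}^\bot:\Gr(V)\to\Gr(V)$ is an order-reversing involution, and that Exercise~\ref{ex:modinvrk} (applied with $\dag={}^\bot$) gives $|\lambda^\bot|=n-|\lambda|$ for every subspace $\lambda$. Next I would unwind the hypothesis $\xi\bot\eta$: by the lattice-theoretic notion of orthogonality (Definition~\ref{def:gmodort}) it means precisely $\xi\subset\eta^\bot$ and $\eta\subset\xi^\bot$, which is exactly the hypothesis of Theorem~\ref{thm:3subspaces}(b).

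The one genuine step is a notational reconciliation. Since juxtaposition denotes intersection, $\alpha\eta^\bot=\eta^\bot\alpha$, so from Notation~\ref{not:linasd} I obtain
$$
\rho(\alpha\eta^\bot,\xi)=|(\alpha\eta^\bot+\xi)/\xi|=|(\eta^\bot\alpha+\xi)/\xi|,
$$
and likewise $\rho(\alpha^\bot\xi^\bot,\eta)=|(\xi^\bot\alpha^\bot+\eta)/\eta|$. Substituting $\dag={}^\bot$ into Theorem~\ref{thm:3subspaces}(b) and replacing the two $\rho$-terms by these quotient dimensions yields the identity \eqref{orthospace-sum-dimensions} of part (a) verbatim, with right-hand side $n-|\xi|-|\eta|$.

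For part (b) I would simply set $\eta=\xi$. The isotropy hypothesis $\xi\subset\xi^\bot$ is exactly $\xi\bot\xi$, so part (a) applies and its right-hand side collapses to $n-2|\xi|$; using $|\xi^\bot|=n-|\xi|$ then rewrites this as $|\xi^\bot|-|\xi|$, giving both displayed forms.

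There is essentially no obstacle here beyond bookkeeping: the mathematical content resides entirely in Theorem~\ref{thm:3subspaces}, and ultimately in the $G$-modular rank identity of Theorem~\ref{thm:Agraded-2lat}. The only point demanding care is verifying that the intersections and sums written in \eqref{orthospace-sum-dimensions} really match the $\rho$-expressions, which I handle above by commutativity of meet. If instead a self-contained argument were wanted, an alternative route runs through the canonical construction: Proposition~\ref{prop:factor-isom}, read in the orthogonal space via the identification $V^*\cong V$ induced by the form (so that the annihilator becomes ${}^\bot$), identifies $\alpha_1=(\eta^\bot\alpha+\xi)/\xi$ as a subspace of $\eta^\bot/\xi$ and $\alpha_2=(\xi^\bot\alpha^\bot+\eta)/\eta$ as its annihilator $\alpha_1^\bot$ in the dual $\xi^\bot/\eta$. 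The rank–nullity relation for annihilators, $\dim\alpha_1+\dim\alpha_1^\bot=\dim(\eta^\bot/\xi)=|\eta^\bot|-|\xi|=n-|\eta|-|\xi|$, then yields part (a) directly. I would present the specialization route as the main proof, as it reuses precisely the machinery this section is built upon.
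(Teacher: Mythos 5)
Your proposal is correct and takes essentially the same approach as the paper: Remark~\ref{rem:canonical-numerical} states that Theorem~\ref{thm:InnProd} follows by specializing Theorem~\ref{thm:3subspaces} with $\dag={}^\bot$, which is exactly your main argument, including the bookkeeping via Notation~\ref{not:linasd} and $|\lambda^\bot|=n-|\lambda|$. Your alternative route through Proposition~\ref{prop:factor-isom} is also precisely the second derivation the paper records in that remark.
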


Recall that here $|U|=\dim U$.

\begin{remark}
\label{rem:canonical-numerical}
In 
the logic of this Section, Theorem~\ref{thm:InnProd} is a consequence of Theorem~\ref{thm:3subspaces}.
On the other hand, it can be also seen as a corollary of Proposition~\ref{prop:factor-isom}. Indeed, in the $n$-dimensional nondegenerate orthogonal space $(V,\langle\cdot,\cdot\rangle)$ the inner product defines an isomorphism between $V$ and $V^*$ and the orthogonal 
complement is the same as its annihilator in $V^*$.
From Proposition~\ref{prop:factor-isom} we know that the left-hand side in
\eqref{orthospace-sum-dimensions} is equal to $|\xi^\bot/\eta|$, which coincides with right-hand side.
\end{remark}

\begin{exercise}
\label{ex:isotropic1}
Let $V$ be a symplectic space of dimension $2m$. For any Lagrangian subspace $\alpha$
and any isotropic subspace $\xi$ the identity
$|\alpha\ncap\xi^\bot|-|\alpha\ncap\xi|=m-|\xi|=\frac{1}{2}(|\xi^\bot|-|\xi|)$ holds.
\end{exercise}

\section{Variants of the Cleaning Lemma}
\label{sec:CL}

\subsection{The case of Quantum Stabilizer Codes}
\label{ssec:QSC}

A description of Quantum Stabilizer Codes (QSC) involves, at the primary level, certain algebras of operators in a (finite-dimensional) Hilbert space. 
We will work at the secondary level --- a symplectic space model, in which operators and their commutation relations are substituted by vectors and their symplectic products, while all essential concepts, such as, for instance, the code space and 
logical operators, have their counterparts. 

\smallskip
Let us very briefly, just in order to make connection to the Quantum Codes literature, outline the constituents of the primary level:

\begin{itemize}
\item The $2^n$-dimensional ($n$-qubit) complex Hilbert space $\HH=H^{\otimes n}=\mathop{\otimes}\limits_{j=1}^n H^{(j)}$, where $H\cong\CC^2$. The number $n$ is the {\em length}\ of the code.
\item The algebra $\mathcal{L}(\HH)$ of linear endomorphisms of $\HH$.
\item The unitary group $U(\HH)$ and its subgroup $P(\HH)$ generated by tensor products of Pauli matrices and the scalar $iI$.
\item The {\em stabilizer group}\ $S$ --- a commutative subgroup of $P(\HH)$.
\item The {\em code space}\ --- the common eigenspace  with eigenvalue $1$ of operators from $S$.
\item The {\em algebra of logical (or error) operators}\ $L$ consists of operators for which the code space is an invariant subspace.
\item The {\em algebra of trivial
(degenerate)
logical operators}\ $L_0=\langle S\rangle\subset L$ consists of operators that 
act as scalars in the code space.
\item The {\em equivalence of logical operators}: $A,B\in L$, $A\sim B$ means that $B=AC$ for some $C\in S$.
\item A {\em 
region}\ $M\subset[n]$, where $[n]=\{1,2,\dots,n\}$ and its complement $M^c=[n]\setminus M$.
\item The subalgebra $\alpha=\otimes_{j\in M}\mathcal{L}(H^{(j)})$ in $\mathcal{L}(\HH)$ of {\em operators supported on $M$}\ and the complementary subalgebra
$\alpha^\dag$ of operators supported on $M^c$, so that $\alpha$ and $\alpha^\dag$ commute and 
$\alpha\otimes\alpha^\dag=\mathcal{L}(\HH)$.
\end{itemize}

The Cleaning Lemma for QSC asserts the alternative:
{\em
\begin{itemize}
    \item[---] 
    either there exists 
 a nontrivial logical operator supported on $M$, i.e.
 $L\cap \alpha\not\subset L_0$;
    \item[---] or any logical operator, up to equivalence,
     can be realized as an operator supported on $M^c$, i.e. for any $A\in L$ there exists $B\in  \alpha^\dag$ such that $A\sim B$.
\end{itemize}
} 

``In other words, a correctable region can be {\em cleaned}\ of any logical operators.'' \cite[p.~312]{Haah2016}.

\smallskip
The secondary level provides a simplification due to the fact that the Pauli group factored by scalar operators (one disregards the phase factors of wavefunctions) is commutative. The factorgroup is called the {\em reduced Pauli group}. Its elements,
being products of Pauli $X$-matrices in some of the spaces $H^{(j)}$ and, independently, Pauli $Z$-matrices in
some of the spaces $H^{(j)}$, can be encoded by binary words of length $2n$; the product in the reduced group 
corresponds to the componentwise sum of the words treated as vectors in $\FF_2^n$.
Morever, the corresponding Pauli operators (before reduction) commute or anti-commute depending on the value of a suitable symplectic product of the respective vectors in $\FF_2^n$. 

The table below presents a basic dictionary from the primary level to the secondary level. The right column shows the dimensions of the (sub)spaces involved in the secondary level.

\bigskip\noindent
\begin{tabular}{p{0.5\textwidth}|l|c}
\hline
Reduced Pauli group $P/\langle e^{2\pi i/4}\rangle$ & Symplectic space $P$ & $2n$ \\
Stabilizer group $S$\newline 
(trivial logical operators) &  Isotropic subspace $S\subset P$ & $n-k$\\
Its centralizer in $P$ & $S^\bot$ & $n+k$ \\
Logical operators up to equivalence & $S^\bot/S$ & $2k$ \\
Operators supported on 
region $M$ & Subspace $\alpha\subset P$ & \\ 
Operators supported on region $M^c$ & $\alpha^\bot$ & \\ 
Logical operators supported on $M$ & $(\alpha\cap S^\bot+S)/S$ & $\ell_M$ \\
Logical operators supported on $M^c$ & $(\alpha^\bot\cap S^\bot+S)/S$ & $\ell_{M^c}$ \\
\hline
\end{tabular}

\bigskip
At the secondary level, the Cleaning Lemma for QSC is the last part of the following proposition, cf.~\cite[Proposition 2.19 and Theorem~2.20]{Haah2016}.

\begin{proposition}
\label{prob:CL-stabilizer}
For a symplectic space $P$ and its subspaces as shown in the 2nd and 3rd columns of the above table,
the identity
$$
 \ell_M+\ell_{M^c}=2k
$$
holds.
In particular, if $\ell_M=0$, then $\ell_{M^c}=2k$, hence for any $v\in S^\bot$ there exists
$v'\in S^\bot\cap\alpha^\bot$ such that $v-v'\in S$.
\end{proposition}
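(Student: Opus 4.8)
The plan is to recognize Proposition~\ref{prob:CL-stabilizer} as a direct translation of Theorem~\ref{thm:InnProd}(b) into the symplectic/QSC dictionary, so that the entire argument reduces to matching definitions and invoking the already-established rank identity. First I would set up the symplectic space: the space $P$ of dimension $2n$ carries a nondegenerate symplectic form (the reduced-Pauli commutation product), so the operation ${}^\bot$ is a quasi-complementation on $\Gr(P)$. The stabilizer subspace $S$ is isotropic, $S\subset S^\bot$, with $|S|=n-k$ and $|S^\bot|=n+k$. By the dictionary, $\ell_M=|(\alpha\cap S^\bot+S)/S|$ and $\ell_{M^c}=|(\alpha^\bot\cap S^\bot+S)/S|$, where $\alpha$ is the subspace of operators supported on $M$ and $\alpha^\bot$ is its orthogonal complement (operators supported on $M^c$).

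Next I would apply Theorem~\ref{thm:InnProd}(b) with $\xi=S$ (the isotropic subspace) and the chosen $\alpha$. That theorem gives immediately
$$
|(S^\bot\alpha+S)/S|+|(S^\bot\alpha^\bot+S)/S|=|S^\bot|-|S|=(n+k)-(n-k)=2k.
$$
The only gap to close is the identification $S^\bot\alpha=\alpha\cap S^\bot=\alpha\,S^\bot$ (and likewise for $\alpha^\bot$), which is purely notational since the paper writes the meet as juxtaposition; thus $|(S^\bot\alpha+S)/S|=\ell_M$ and $|(S^\bot\alpha^\bot+S)/S|=\ell_{M^c}$. This yields $\ell_M+\ell_{M^c}=2k$ at once.

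For the ``cleaning'' consequence I would argue as follows. If $\ell_M=0$, then $\ell_{M^c}=2k$. Since $\ell_{M^c}=|(\alpha^\bot\cap S^\bot+S)/S|$ and the full space of logical operators $S^\bot/S$ also has dimension $2k$, the natural inclusion of $(\alpha^\bot\cap S^\bot+S)/S$ into $S^\bot/S$ is between spaces of equal finite dimension and hence is surjective. Therefore every coset $v+S$ with $v\in S^\bot$ meets $\alpha^\bot\cap S^\bot+S$; equivalently, for each $v\in S^\bot$ there is $v'\in S^\bot\cap\alpha^\bot$ with $v-v'\in S$, which is precisely the asserted realizability of any logical operator on the complementary region $M^c$.

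I expect no genuine obstacle here: the content lies entirely in Theorem~\ref{thm:InnProd}(b), and the proof is essentially a verification that the QSC data $(P,S,\alpha)$ instantiate the hypotheses of that theorem. The one point demanding a word of care is the dimension-count step in the cleaning conclusion, where one must note that equality of dimensions of a subspace and its ambient space forces them to coincide; this is elementary but worth stating explicitly so the passage from $\ell_{M^c}=2k$ to the coset statement is airtight.
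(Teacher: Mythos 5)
Your proposal is correct and follows exactly the paper's route: the paper's entire proof is ``Apply Theorem~\ref{thm:InnProd}(b) with $V=P$ and $\xi=S$,'' which is precisely your main step, with the same dimension count $|S^\bot|-|S|=(n+k)-(n-k)=2k$. Your explicit verification of the dictionary identifications and the dimension-equality argument for the cleaning consequence merely spell out what the paper leaves implicit.
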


\begin{proof} 
Apply Theorem~\ref{thm:InnProd}(b) with $V=P$ and $\xi=S$.
\end{proof}

\subsection{The case of Quantum CSS Codes}
\label{ssec:QCSS}

The Calderbank-Shor-Steane (CSS) codes are a particular case of stabilizer codes.
Let us restrict considerations in this case to the second (symplectic space) level right away and illustrate Proposition~\ref{prob:CL-stabilizer} using matrix representations.

Let $V$ be the vector space $\FF_2^n$ with inner product
$$
 (u,v)=\sum_{i=1}^n u_i v_i.
$$
The symplectic product in the space $P=V\oplus V$ is given by $(u\oplus u',v\oplus v')=(u,v')-(u',v)$.

Let $H_x$, $H_z$ be two matrices (over $\FF_2$) of sizes $m_1\times n$ and $m_2\times n$ respectively, such that
\begin{equation}
\label{Hxz0}
H_x H_z^T=0.
\end{equation}
Equivalently,
$$
 \mat{H_x & 0 \\ 0 & H_z}\,\mat{0 & I_n \\ I_n & 0}\,\mat{H_x^T & 0 \\ 0 & H_z^T}=0_{(m_1+m_2)\times (m_1+m_2)}.
$$

Let $\xi$ and $\eta$ be the row spaces of the matrices $H_x$ and $H_z$, respectively.
The elements of these subspaces are called the {\em degenerate codewords}.

The subspace $\xi\oplus\eta\subset P$ is the isotropic space $S$ as defined in Sec.~\ref{ssec:QSC}.

The subspace 
$\xi^\bot=\ker H_x$ is called the space of {\em $Z$-codewords}; similarly, $\eta^\bot=\ker H_z$ is
the space of {\em $X$-codewords}. Here ${}^\bot$ is treated as an involution in $\Gr(V)$ (rather than
in $\Gr(P)$).

The condition \eqref{Hxz0} is equivalent to $\xi\subset\eta^\bot$, $\eta\subset\xi^\bot$, or simply
$\xi\bot\eta$.

\begin{proposition}
\label{prop:CL-CSS}
\newcommand{\rank}{\mathop{\mathrm{rk}}}
Let $k$ be the dimension of a CSS code defined by the matrices $H_x$ and $H_z$, so that
$$
 k= |\ker H_x/\img H_z^T|=|\ker H_z/\img H_x^T|=n-\rank H_x-\rank H_z.
$$
Let $\alpha\subset V$ be any subspace. Consider the dimensions of spaces of equivalence classes of $X$- and $Z$- logical operators that have representatives in $\alpha$ $(\ell_x$ and $\ell_z)$ or $\alpha^\bot$ $(\ell'_x$ and $\ell'_z)$:
\begin{align*}
 \ell_{x} &=\left|(\ker H_z\cap \alpha+\img H_x^T)/\img H_x^T\right|,      &\ell_{z} &=\left|(\ker H_x\cap \alpha+\img H_z^T)/\img H_z^T\right|,\\
 \ell'_{x}&=\left|(\ker H_z\cap \alpha^\bot+\img H_x^T)/\img H_x^T\right|, &\ell'_{z}&=\left|(\ker H_x\cap \alpha^\bot+\img H_z^T)/\img H_z^T\right|.
\end{align*} 
Then
$$
\ell_{x}+\ell'_{z}=\ell_{z}+\ell'_{x}=k.
$$
\end{proposition}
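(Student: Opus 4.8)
The plan is to recognize Proposition~\ref{prop:CL-CSS} as a pair of instances of the symplectic Cleaning Lemma already established in Proposition~\ref{prob:CL-stabilizer} (equivalently Theorem~\ref{thm:InnProd}(b)), applied inside the CSS symplectic space $P=V\oplus V$. The key observation is that the CSS situation splits the symplectic data along the $X$/$Z$ decomposition: the isotropic subspace is $S=\xi\oplus\eta$ with $\xi=\img H_x^T$ and $\eta=\img H_z^T$, and the four quantities $\ell_x,\ell_z,\ell'_x,\ell'_z$ are precisely the dimensions of the ``$X$-part'' and ``$Z$-part'' of the spaces of logical operators supported on $\alpha$ and $\alpha^\bot$ (here $\alpha$ should be read as $\alpha\oplus\alpha$ or the appropriate one-sided subspace of $P$, and one must match the bookkeeping so that $\ell_M=\ell_x+\ell_z$ and $\ell_{M^c}=\ell'_x+\ell'_z$).

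First I would set up the dictionary carefully. Working in $\Gr(V)$ with the involution ${}^\bot$ coming from the symmetric form $(u,v)=\sum u_iv_i$, I would apply Theorem~\ref{thm:3subspaces}(b) directly in the single space $V$ rather than passing through $P$: take the quasi-complementation $\dag={}^\bot$ on $\Gr(V)$ and the isotropic-type hypothesis $\xi\subset\eta^\bot$, $\eta\subset\xi^\bot$ supplied by \eqref{Hxz0}. Then Theorem~\ref{thm:3subspaces}(b) gives
\begin{equation*}
\rho(\alpha\eta^\bot,\xi)+\rho(\alpha^\bot\xi^\bot,\eta)=n-|\xi|-|\eta|.
\end{equation*}
Since $\eta^\bot=\ker H_z$ and $\xi^\bot=\ker H_x$, and since $\rho(\alpha\eta^\bot,\xi)=|(\ker H_z\cap\alpha+\xi)/\xi|=\ell_x$ while $\rho(\alpha^\bot\xi^\bot,\eta)=\ell'_z$, this reads $\ell_x+\ell'_z=n-|\xi|-|\eta|$. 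Because $\xi=\img H_x^T$ and $\eta=\img H_z^T$ have dimensions $\rank H_x$ and $\rank H_z$, the right-hand side equals $n-\rank H_x-\rank H_z=k$. This yields $\ell_x+\ell'_z=k$.

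The symmetric identity $\ell_z+\ell'_x=k$ follows by the same application with the roles of $\alpha$ and $\alpha^\bot$ interchanged (equivalently, swapping which of $\xi,\eta$ plays which role), again using $n-|\xi|-|\eta|=k$; so the whole proposition reduces to invoking Theorem~\ref{thm:3subspaces}(b) twice. I would then separately record the claimed formula for $k$ itself: $k=|\ker H_x/\img H_z^T|$ is the standard CSS dimension count, which follows from $\img H_z^T\subset\ker H_x$ (a restatement of \eqref{Hxz0}) together with $|\ker H_x|=n-\rank H_x$ and $|\img H_z^T|=\rank H_z$; the symmetric expression is obtained identically.

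The main obstacle I anticipate is purely notational rather than mathematical: one must verify that the four displayed quotient-dimensions really coincide with the $\rho(\cdot,\cdot)$ quantities in Theorem~\ref{thm:3subspaces}, paying attention to the equalities such as $\alpha\eta^\bot\xi=\alpha\xi$ (which hold because $\xi\subset\eta^\bot$ forces $\eta^\bot\xi=\xi$, exactly as in the proof of Theorem~\ref{thm:3subspaces}). Once this identification is made, there is no further computation — the content is entirely carried by the general rank identity, and the proof is a one-line citation applied in two symmetric configurations.
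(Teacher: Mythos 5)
Your executed proof is correct and is essentially the paper's own: the paper proves the proposition by applying Theorem~\ref{thm:InnProd}(a) to the space $V$ with $\xi=\img H_x^T$, $\eta=\img H_z^T$, which is exactly your invocation of Theorem~\ref{thm:3subspaces}(b) with $\dag={}^\bot$ (Theorem~\ref{thm:InnProd} being just the orthogonal-space restatement of that result), applied twice by the $\alpha\leftrightarrow\alpha^\bot$ symmetry. The only caveat is your opening plan --- deducing the result from Proposition~\ref{prob:CL-stabilizer} in $P=V\oplus V$ --- which, had you pursued it, would only yield the cruder identity $\ell_x+\ell_z+\ell'_x+\ell'_z=2k$ rather than the two separate equalities (as the paper itself remarks after the proposition); but you rightly abandoned that route in the actual argument.
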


\begin{proof}
Taking into account the identifications preceding the proposition, we apply
Theorem~\ref{thm:InnProd}(a) to the space $V$ 
(rather than $P$).
\end{proof}

In the context of CL we have a partition $[n]=M\sqcup M^c$ of the index set of the natural basis in $V$.
The subspace $\alpha\in V$ consists of vectors with zero $M^c$-components, while the subspace $\alpha^\bot$
consists of vectors with zero $M$-components.

We leave to the reader to see that, on the other hand, Proposition~\ref{prop:CL-CSS} 
in a cruder form, 
$\ell_{x}+\ell'_{x}+\ell_{z}+\ell'_{z}=2k$,
follows from Proposition~\ref{prob:CL-stabilizer}.

\medskip
The following artificially simple example may turn helpful for some readers.

\begin{example}  
Consider $V=\FF_2^n$, $n=2k$, $H_x=[I_k, I_k]$ ($I_k$ is the identity $k\times k$ matrix), and $H_z=0$. 
Let $M=\{1,2,\dots,k\}$, $M^c=\{k+1,\dots,2k\}$.

 Let us find all relevant subspaces explicitly. First, obviously,
  \begin{align*}
  \xi&=\{[x,x]\mid x\in \FF_2^k\},
   &
  \xi^\bot&=\{[x, -x] \mid x\in \FF_2^k\},
  \\
  \eta&=\{0\},
  &
  \eta^\bot&=V,
  \\
  \alpha&=\{[x,0]\mid x\in \FF_2^k\},
  &
  \alpha^\bot&=\{[0,x]\mid x\in \FF_2^k\}.
 \end{align*}
 Therefore 
  \begin{align*}
  (\eta^\bot\cap \alpha)+\xi&=\alpha+\xi=\{[x+y,y]\mid x,y\in\FF_2^k\}=V,\\
  (\eta^\bot\cap \alpha^\bot)+\xi&=\alpha^\bot+\xi=\{[x,x+y]\mid x,y\in\FF_2^k\}=V,\\
  (\xi^\bot\cap \alpha)+\eta&=\{0\}+\eta=\{0\},\\
  (\xi^\bot\cap \alpha^\bot)+\eta&=\{0\}+\eta=\{0\}.
  \end{align*}
Hence
  \begin{align*}
  	&\ell_{x}=\ell'_{x}=\dim(V/\xi)=k,\\
  	&\ell_{z}=\ell'_{z}=0,\\
  	&\ell_{x}+\ell'_{z}=\ell_{z}+\ell'_{x}=k.
  \end{align*}   
\end{example}  

\subsection{A homological formulation}
\label{ssec:QCSS-chain}

As we mentioned in Introduction, there is a mutually beneficial interplay between Quantum Coding Theory, particularly with regard to the CSS codes, and Algebraic Topology. 
The table in \cite[p.~12]{Bravyi:HMP:2014} 
supplies a dictionary between the two theories. 
Making use of some entries of that dictionary, we 
now interpret
the CSS version of CL as given in Proposition~\ref{prop:CL-CSS}. 

Consider a chain complex $\mathcal{C}$,
\begin{equation}
\label{complex-ch}
C_2\overset{\partial_2}{\longrightarrow}C_1\overset{\partial_1}{\longrightarrow}C_0,
\end{equation}
and the corresponding cochain complex
\begin{equation}
\label{complex-coch}
C^2\overset{\delta^2}{\longleftarrow}C^1\overset{\delta^1}{\longleftarrow}C^0.
\end{equation}
Here $C_i$ is an $n_i$-dimensional space over $\FF_2$, $C^i$ is the dual space of $C_i$, 
the boundary and coboundary operators $\partial_i$ 
and $\delta^i$ are dual to each other. By definition, $\partial_1\partial_2=0$; equivalently $\delta^2\delta^1=0$. Suppose that for each $i$ the biorthogonal bases in $C_i$ and $C^i$ are fixed and, accordingly, the isomorphisms between $C_i$ and $C^i$ are fixed. The operators $\delta^i$ and $\partial_i$ can be identified with their matrices.

To a given CSS code $Q$ of length $n$ defined by matrices $H_x$ of size $m_1\times n$ and $H_z$ of size $m_2\times n$
we put in correspondence the complexes \eqref{complex-ch}, \eqref{complex-coch} 
with $n_0=m_1$, $n_1=n$, $n_2=m_2$, 
$\partial_1=(\delta^1)^T=H_x$ and $\partial_2^T=\delta^2=H_z$.
The condition $\partial_1\partial_2=0$
takes the form of Eq.~\eqref{Hxz0}.  The space of degenerate $X$-codewords is $\xi=\img H_x^T=\img \delta^1$, the space of degenerate $Z$-codewords is $\eta=\img H_z^T=\img \partial_2$. The set of $X$-codewords is $\ker H_z=\ker \delta^2=\eta^\bot$, and the set of $Z$-codewords is $\ker H_x=\ker \partial_1=\xi^\bot$. Then the space of logical $X$-operators (up to equivalence) is $\eta^\bot/\xi=\ker \delta^2/\img\delta^1=H^1(\mathcal{C};\ZZ_2)$, i.e.\ the cohomology group of the complex $\mathcal{C}$; the space of logical $Z$-operators is $\xi^\bot/\eta=\ker \partial_1/\img \partial_2=H_1(\mathcal{C};\ZZ_2)$, i.e.\ the cohomology group of the complex $\mathcal{C}$. The dimension $k$ of the code $Q$ is equal to the dimension of the space of its logical $X$- (or $Z$-) operators, 
i.e. $k=\dim H_1(\mathcal{C};\ZZ_2)$. 

For a subspace $\alpha\subset C_1$ we denote by $\hclass{\alpha}=(\alpha\ncap \xi^\bot+\eta)/\eta$ the set of homology classes in $H_1(\mathcal{C};\ZZ_2)$ that can be represented by cycles contained in $\alpha$. Similarly, for a subspace $\beta\subset C^1$ we denote by $\hclass{\beta}=(\beta\ncap \eta^\bot+\xi)/\xi$ the set of cohomology classes in $H^1(\mathcal{C};\ZZ_2)$ that can be represented by cocycles contained in $\beta$. By Proposition \ref{prop:factor-isom}, for any $\alpha\subset C_1$ we have 
$$
   \hclass{\alpha}^\bot = \hclass{\alpha^\bot},
$$
where the symbol ${}^\bot$ in the left-hand side stands to denote the annihilator corresponding to the duality between $H_1(\mathcal{C};\ZZ_2)$ and $H^1(\mathcal{C};\ZZ_2)$.

This means that $\alpha^\bot$ contains representatives of all
logical $X$-operators that commute with all logical $Z$-operators that have representatives in $\alpha$.

For dimensions we have 
$$
 |\hclass{\alpha}|+|\hclass{\alpha^\bot}|=\left|\hclass{\alpha}\right|+|\hclass{\alpha}^\bot|=\dim H_1(\mathcal{C};\ZZ_2)=k,
$$
which is equivalent to the identity stated in Proposition~\ref{prop:CL-CSS}.

\subsection{The case of Subsystem Codes}
\label{ssec:SubC}

Subsystem Codes (SubC) \cite[Sec.~1.3 and 3]{Bravyi&Terhal:2009}, \cite{Haah:2012} are a generalization of Stabilizer Codes.
At the primary level, a new constituent is the {\em gauge group}\ $G$, which is a subgroup of the Pauli group
like the stabilizer group $S$ in the case of QSC, but generally non-commutative.
At the secondary level, it corresponds to a general (non-isotropic) subspace of the symplectic space~$P$.

The stabilizer group $S$ in this context is the center of the group $G$.
Conversely, one can interpret $G$ as the group generated by $S$ and the Pauli operators acting on some number $g$
of  ``gauge qubits''. 
At the secondary level, $S$ corresponds to the radical, $G\cap G^\bot$, of the subspace $G$.
QSC is formally realized as a particular case of SubC when the gauge group is Abelian: $G=S$, that is,  $g=0$.

\smallskip
A dictionary between the primary and secondary levels for SubC is presented in the following
table (notation mostly corresponds to that in \cite{Haah:2012}). 

\bigskip\noindent
\begin{tabular}{p{0.35\textwidth}|l|c}
\hline
Gauge group $G\leq P$ & Subspace $G\subset P$ & $n-k+g$ \\ 
$\mathcal C(G)$ --- centralizer of $G$ in the Pauli group & Subspace $G^\bot$ & $n+k-g$ \\
Stabilizer group $S$ of the code $=$ center of $G$ &
Isotropic subspace $S=G\cap G^\bot$ & $n-(k+g)$\\
Algebra of logical operators $\langle\mathcal{C}(S)\rangle$ & Subspace $S^\bot=G^\bot+G$ & $n+k+g$\\
Bare logical operators 
& $G^\bot/S$ & $2k$\\ 
Dressed logical operators
& $S^\bot/G$ & $2k$\\
Operators supported on $M$ & Subspace $\alpha\subset P$ & \\ 
Operators supported on $M^c$ & Subspace $\alpha^\bot$ & \\ 
Dressed logical operators supported on $M$ & $(S^\bot\cap\alpha+G)/G$ & $g(M)$ \\
Bare logical operators supported on $M^c$  & $(G^\bot\cap\alpha^\bot+S)/S $ & $g_{\text{bare}}(M^c)$  \\
\hline
\end{tabular}

\medskip
Note that the spaces of ``dressed''
and ``bare'' logical operators are canonically isomorphic at the symplectic space level: $G^\bot/S\cong(G^\bot+G)/G\cong  S^\bot/G$, however, their versions with restricted support (in the last two lines) are not.

\medskip
The Cleaning lemma for a subsystem code with gauge group $G$ \cite[Lemma~2, p.~14]{Bravyi&Terhal:2009} asserts the alternative: if $M$ is an arbitrary subset $M$ of qubits, then

{\em
\begin{itemize}
    \item[---] 
    either there exists a nontrivial dressed logical operator supported on $M$;
    \item[---]
     or any logical operator, up to equivalence modulo $S$,
     can be realized as an operator supported on $M^c$.
\end{itemize}
} 

\medskip
Similarly to Proposition~\ref{prob:CL-stabilizer}, there is a formulation
at the secondary level of a more general dimension identity, which implies CL when $g(M)=0$;
this is identical to the first part of Lemma~2 in \cite{Haah:2012}, except that $M$ and $M^c$
are interchanged.

\begin{proposition}
\label{prop:CL-subsystem}
Given are a $2n$-dimensional symplectic space $P$, a subspace $G\subset P$, and a partition $[n]=M\sqcup M^c$.
Using notation as in the 2nd and 3rd columns of the table above, we have the idenity.
$$
g_{\text{bare}}(M^c)+g(M)=2k.
$$
\end{proposition}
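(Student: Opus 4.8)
The plan is to identify the two quantities $g(M)$ and $g_{\text{bare}}(M^c)$ with the two summands on the left-hand side of the identity in Theorem~\ref{thm:InnProd}(a). The symplectic space $P$ is a nondegenerate orthogonal space of antisymmetric type of dimension $2n$, so that theorem applies with $V=P$. First I would make the substitution $\xi=G$ and $\eta=S=G\cap G^\bot$. With this choice the first summand $|(\eta^\bot\cap\alpha+\xi)/\xi|$ becomes $|(S^\bot\cap\alpha+G)/G|=g(M)$, while the second summand $|(\xi^\bot\cap\alpha^\bot+\eta)/\eta|$ becomes $|(G^\bot\cap\alpha^\bot+S)/S|=g_{\text{bare}}(M^c)$, matching the two entries of the table exactly. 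Here $\alpha$ and $\alpha^\bot$ are the subspaces of operators supported on $M$ and on $M^c$, respectively, and all annihilators ${}^\bot$ are taken in $P$.

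Next I would check the single hypothesis of Theorem~\ref{thm:InnProd}(a), namely the orthogonality $\xi\bot\eta$. Since $S=G\cap G^\bot\subset G^\bot=\xi^\bot$, the relation $\eta\subset\xi^\bot$ holds, which is exactly $G\bot S$. The theorem then yields directly $g(M)+g_{\text{bare}}(M^c)=2n-|G|-|S|$.

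Finally I would evaluate the right-hand side using the dimension data recorded in the third column of the table, namely $|G|=n-k+g$ and $|S|=n-(k+g)$. Then $2n-|G|-|S|=2n-(n-k+g)-(n-k-g)=2k$, which is the asserted identity.

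The proof presents essentially no obstacle beyond careful bookkeeping. The only two points deserving attention are, first, confirming that the meets, sums, and quotients line up so that the two summands of Theorem~\ref{thm:InnProd}(a) are \emph{literally} $g(M)$ and $g_{\text{bare}}(M^c)$ --- in particular that the operation ${}^\bot$ is throughout the symplectic annihilator in $P$ rather than any annihilator in a factorspace; and second, that the orthogonality $G\bot S$ is a consequence of the defining relation $S=G\cap G^\bot$ and not an extra assumption. Everything else reduces to the arithmetic of the two dimension formulas.
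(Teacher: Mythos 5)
Your proof is correct and takes essentially the same approach as the paper: both apply Theorem~\ref{thm:InnProd}(a) to the symplectic space $P$ with $\{\xi,\eta\}=\{S,G\}$, verify the hypothesis $G\bot S$ from $S=G\cap G^\bot$, and finish with the dimension arithmetic $2n-|G|-|S|=2k$. The only (immaterial) difference is that you take $(\xi,\eta)=(G,S)$ while the paper takes $(\xi,\eta)=(S,G)$; by the symmetry of the theorem in $\xi,\eta,\alpha,\alpha^\bot$ these give the same identity, and your choice happens to match the table entries for $g(M)$ and $g_{\text{bare}}(M^c)$ more literally.
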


\begin{proof}
Put $\xi=S$ and $\eta=G$ in Theorem~\ref{thm:InnProd}(a). (The dimension of the ambient space, denoted by 
$n$ in Th.~\ref{thm:InnProd}, becomes $2n$ in the present application.)
The condition $G\bot S$ is met, and
$|S^\bot|-|G|=2k$. Hence the result.
\end{proof}

\begin{exercise}
The second part of Lemma~2 in \cite{Haah:2012}  states a dimension identity for {\em subsystem CSS codes}. Obtain it as a corollary of results of our Sec.~\ref{ssec:lalg}.
\end{exercise}

\section{Miscellaneous variations}
\label{sec:misc}

We give a few further applications of the theory developed in Sec.~\ref{sec:genrk}. These examples may not be so important for their own sake, but we invite the readers to view them as a ``food for thought''.

\subsection{Fredholm's Alternative
and the Finite-dimensional \\Index Theorem}
\label{ssec:misc-FA}

To get a feel of a proposed novel approach, to decide whether it is deep or shallow, it helps to test it in a well-familiar situation. We offer such an example here. A conclusion is left to the judgement of the reader.

Fredholm's Alternative states that for a linear endomorphism $A:V\to V$ of a finite-dimensional vector space exactly one of the following is true:

\begin{itemize}
\item[(i)] 
 There exists a vector $x\neq 0$ such that $Ax=0$.

\item[(ii)]
For any vector $y\in V$
there is a vector $x$ such that $Ax=y$.
\end{itemize}

A quantitative generalization of this statement is the finite-dimensional Index Theorem. We will derive it as a consequence of our
abstract Theorem~\ref{thm:Agraded-2lat}. To make things a little bit more interesting, we consider vector spaces over a field $\kk$ which is not necessarily commutative. If operators are represented by matrices, the theorem says that the left column rank of a rectangular matrix is equal to its right row rank, cf.\ \cite[Sec.~I.5]{Artin1958}.

Recall that if $V$ is a left (respectively, right)
vector space over $\kk$, then its dual, $V^*$, defined as the space of linear maps $V\to\kk$, is a right (resp., left) vector space over $\kk$.
If $A:V\to W$ is a linear map between left vector spaces, then its transpose is the map 
$A^T: W^*\to V^*$ defined by
$$
 \langle v,A^T w'\rangle=\langle Av,w'\rangle 
$$
for any $v\in V$, $w'\in W^*$.

\begin{theorem}[Finite-dimensional index theorem]
Consider left vector spaces $V$ and $W$ over a field $\kk$, which is not assumed to be commutative.
Suppose that
$m=\dim V$ and $n=\dim W$.
Let $V^*$ and $W^*$ be the respective dual spaces (which are right vector spaces over $\kk$). 
Let $A:V\to W$ be a linear map and $A^T:W^*\to V^*$
its transpose. Then
$$
 \dim\ker A-\dim\ker A^T=m-n.
$$
\end{theorem}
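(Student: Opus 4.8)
The plan is to derive the index theorem as the genuine two-lattice instance of Theorem~\ref{thm:Agraded-2lat}, in which the two lattices $L_1$ and $L_2$ are actually different. This is precisely the situation excluded by the remark preceding Theorem~\ref{thm:Agraded-2}, so Fredholm's Alternative is the one application that really uses the full generality of the two-lattice theorem rather than the self-dual Theorem~\ref{thm:Agraded-2}.

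First I would form the left vector space $U=V\oplus W$ together with its dual $U^*=V^*\oplus W^*$ (a right vector space over $\kk$), and set $L_1=\Gr(U)$, $L_2=\Gr(U^*)$, each graded by dimension, so that $G=\ZZ$ with additive notation. The annihilator map ${}^\bot\colon\Gr(U)\to\Gr(U^*)$ is an order-reversing bijection, and since $\dim\lambda+\dim\lambda^\bot=\dim U=m+n$ for every subspace $\lambda$, the common-product condition \eqref{commonproduct} holds with constant value $m+n$. Hence ${}^\dag:={}^\bot$ is an anti-isomorphism $L_1\to L_2$. Both ingredients used here --- modularity of $\Gr$ and the complementary-dimension identity --- remain valid over an arbitrary division ring, so the non-commutativity of $\kk$ causes no difficulty at the lattice level.

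The heart of the matter is the choice of the three distinguished elements. I would take $\alpha=\{(v,Av)\mid v\in V\}\in L_1$, the graph of $A$; a direct computation using the transpose identity $\langle Av,w'\rangle=\langle v,A^Tw'\rangle$ shows that $\alpha^\dag=\{(-A^Tw',w')\mid w'\in W^*\}\in L_2$, the graph of $-A^T$. For the remaining data I would take $\xi=V\oplus\mathbf{0}\in L_1$ and $\eta=V^*\oplus\mathbf{0}\in L_2$, whence $\eta^\dag=\mathbf{0}\oplus W$ and $\xi^\dag=\mathbf{0}\oplus W^*$. Then the four meets occurring in the identity evaluate to $\xi\alpha=\ker A\oplus\mathbf{0}$, $\eta^\dag\alpha=\mathbf{0}$, $\xi^\dag\alpha^\dag=\mathbf{0}\oplus\ker A^T$, and $\eta\alpha^\dag=\mathbf{0}$, so that $\zdim{\eta^\dag}{\xi}{\alpha}_1=-\dim\ker A$ and $\zdim{\xi^\dag}{\eta}{\alpha^\dag}_2=\dim\ker A^T$, while the right-hand side is $[\eta^\dag/\xi]_1=\dim(\mathbf{0}\oplus W)-\dim(V\oplus\mathbf{0})=n-m$. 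Substituting into Theorem~\ref{thm:Agraded-2lat} yields $-\dim\ker A+\dim\ker A^T=n-m$, which is exactly the asserted identity $\dim\ker A-\dim\ker A^T=m-n$.

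The one step demanding care --- and the main obstacle --- is the bookkeeping of the left/right module structures forced by a non-commutative $\kk$: I must check that $\alpha$ is genuinely a left subspace of $U$, that $\alpha^\dag$ is the graph of $-A^T$ viewed as a map of right spaces, and that the dimension counts are taken consistently (left-dimension in $L_1$, right-dimension in $L_2$). None of this affects the lattice identity itself, which is purely formal; once the dictionary is fixed, the conclusion is immediate.
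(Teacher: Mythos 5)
Your proposal is correct and takes essentially the same route as the paper's own proof: the same pair of lattices $L_1=\Gr(V\oplus W)$ and $L_2=\Gr(V^*\oplus W^*)$ linked by the annihilator anti-isomorphism, the same choices $\xi=V\oplus\mathbf{0}$, $\eta=V^*\oplus\mathbf{0}$, and $\alpha$ the graph of $A$ (with $\alpha^\bot$ the graph of $-A^T$), and the same substitution into Theorem~\ref{thm:Agraded-2lat} yielding $-\dim\ker A+\dim\ker A^T=n-m$. Your extra remarks on the left/right module bookkeeping are a point the paper leaves implicit, but they do not change the argument.
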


\begin{proof}
There is a natural duality between the left space $V\oplus W$
and the right space $V^*\oplus W^*$ defined by the pairing
$$
 \langle v\oplus w,v'\oplus w'\rangle=
\langle v,v'\rangle+\langle w,w'\rangle.
$$
Consider the modular lattices $L_1=\Gr(V\oplus W)$ (left Grassmanian)
and 
$L_2=\Gr(V^*\oplus W^*)$ (right Grassmanian).
The map $$L_1\ni\zeta\mapsto\zeta^\bot\in L_2$$ 
is an order-reversing bijection from $L_1$
onto $L_2$. 

Consider the subspaces
$$
    \xi=V\oplus 0_W \in L_1, \qquad \eta=V^*\oplus 0_{W^*}\in L_2,
$$
and the graph of $A$, 
$$
\alpha=\{v\oplus Av\mid v\in V\}\in L_1.
$$

Then
$$
    \xi^\bot=0_{V^*}\oplus W^*, \qquad \eta^\bot=0_{V}\oplus W,
$$

$$
 [\eta^\bot/\xi]_1=[\xi^\bot/\eta]_2 =n-m.
$$

Next,
$$
    \xi\cap\alpha=\ker A\oplus 0_W, \qquad \eta^\bot\cap\alpha=0_{V\oplus W},
$$
hence
$$
 \zdim{\eta^\bot}{\xi}{\alpha}_1=-\dim\ker A.
$$

Since
$$
\alpha^\bot=\{(-A^Tw',w')\mid w'\in W^*\}\in L_2,
$$
we obtain similarly
\begin{equation*}
    \xi^\bot\cap\alpha^\bot=0_{V^*}\oplus\ker A^T, \qquad \eta\cap\alpha^\bot=0_{V^*\oplus W^*},
\end{equation*}
so
$$
 \zdim{\xi^\bot}{\eta}{\alpha^\bot}_2=\dim\ker A^T.
$$
Theorem~\ref{thm:Agraded-2lat} yields the required identity.
\end{proof}

\subsection{Some further applications to quantum codes}
\label{ssec:misc-QC}

In this section we consider two propositions that are not versions or generalizations of CL, yet
are also related to QECC and are proved the technique developed in Sec.~\ref{sec:CL}.

The first proposition implies that there are ``universal'' subspaces in $\FF_2^n$
containing many logical operators (more precisely, representatives of the corresponding factorclasses) for {\em any}\ CSS code (\S~\ref{ssec:QCSS}) of the given length and dimension. This fact has no analogs in the classical coding theory.

\begin{proposition}
\label{prop:logops-lagrangian}
Let $Q$ be a quantum CSS code of length $n$ and dimension $k$. If $\alpha\in \FF_2^{n}$ is a subspace such that $\alpha=\alpha^\bot$, then $\alpha$ contains (representatives of) $k$ independent  logical operators of the code $Q$.
\end{proposition}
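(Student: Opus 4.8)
The plan is to deduce the statement directly from the CSS reciprocity relation of Proposition~\ref{prop:CL-CSS}, exploiting the self-duality $\alpha=\alpha^\bot$. First I would recall the structure of the logical space. With $\xi=\img H_x^T$, $\eta=\img H_z^T$, and $S=\xi\oplus\eta$ inside $P=V\oplus V$, the space of logical operators up to equivalence is $S^\bot/S=(\eta^\bot/\xi)\oplus(\xi^\bot/\eta)$, of total dimension $2k$; the first summand consists of logical $X$-operators and the second of logical $Z$-operators. A logical class has a representative supported in $\alpha$ precisely when it lies in $(\eta^\bot\cap\alpha+\xi)/\xi$ (for the $X$-part) or in $(\xi^\bot\cap\alpha+\eta)/\eta$ (for the $Z$-part). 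The dimensions of these two spaces are exactly the numbers $\ell_x$ and $\ell_z$ appearing in Proposition~\ref{prop:CL-CSS}.

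Next I would observe that the two kinds of logicals sit in complementary summands of $S^\bot/S$, so that the logical operators having a representative in $\alpha$ together span an independent family of size $\ell_x+\ell_z$. It therefore suffices to prove the single numerical identity $\ell_x+\ell_z=k$.

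The only substantive input is then a one-line application of Proposition~\ref{prop:CL-CSS}. That proposition gives $\ell_x+\ell'_z=k$, where the primed quantities $\ell'_x,\ell'_z$ are defined by the same formulas with $\alpha^\bot$ in place of $\alpha$. Since $\alpha=\alpha^\bot$ by hypothesis, the defining expressions for $\ell_z$ and $\ell'_z$ coincide, whence $\ell'_z=\ell_z$; substituting yields $\ell_x+\ell_z=k$, as desired. (The self-duality also forces $\dim\alpha=n/2$, so such $\alpha$ exist only for even $n$, but this plays no role beyond existence.)

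I expect the main, and rather mild, obstacle to be the bookkeeping in the second step: one must confirm that the $\ell_x$ independent $X$-classes and the $\ell_z$ independent $Z$-classes stay independent when viewed together inside $S^\bot/S$. This is immediate from the direct-sum decomposition $S^\bot/S=(\eta^\bot/\xi)\oplus(\xi^\bot/\eta)$, since the $X$-classes lie in the first summand and the $Z$-classes in the second, so no cancellation between them is possible. Everything else reduces to a direct citation of Proposition~\ref{prop:CL-CSS} together with the elementary identity $\ell'_z=\ell_z$ forced by $\alpha=\alpha^\bot$.
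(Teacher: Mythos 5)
Your proposal is correct and follows essentially the same route as the paper: both rest on the single observation that $\alpha=\alpha^\bot$ makes $\ell'_z=\ell_z$ in Proposition~\ref{prop:CL-CSS}, yielding $\ell_x+\ell_z=k$. The only difference is that you spell out the independence bookkeeping via the decomposition $S^\bot/S=(\eta^\bot/\xi)\oplus(\xi^\bot/\eta)$, which the paper leaves implicit.
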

\begin{proof}
    Let $\xi$ and $\eta$ be the spaces of degenerate $X$-codewords and $Z$-codewords of the code $Q$. 
    Recall that $\ell_{X}=|(\eta^\bot\alpha+\xi)/\xi|$ and
    $\ell_{Z}=|(\xi^\bot\alpha+\eta)/\eta|$ are the respective dimensions of equivalence classes of logical $X$- and $Z$-operators that have representatives in the subspace $\alpha$.
    Since $\alpha=\alpha^\bot$, Proposition \ref{prop:CL-CSS} yields
    $\ell_{X}+\ell_{Z}=k$.
\end{proof}
Note that $\alpha=\alpha^\bot$ is possible only when $n$ is even; then $\dim\alpha=n/2$. The simplest example of a ``universal'' subspace $\alpha=\alpha^\bot$ for even $n$ is $$\alpha=\{(x_1,\dots,x_{n/2},x_1,\dots,x_{n/2})\mid x_i\in\FF_2\}.$$
Proposition \ref{prop:logops-lagrangian} says that any quantum CSS code of dimension $k$ in $\FF_2^n$ has $k$ independent logical operators in $\alpha$. Unlike in Sec.~\ref{sec:CL}, this subspace $\alpha$ is not described in terms of sets of qubits (supports).

\medskip
The next proposition is not new. It is a coordinate-free version of \cite[Lemma 19]{baspin2021:treewidth-dist}. The quoted lemma, in turn, is a ``purified'' version of the original
statement \cite[Eq.~(14)]{Bravyi:UpperDistDim}, which was fomulated in terms of quantum codes of general type and included complicated notions.
In the works \cite{Bravyi:UpperDistDim,baspin2021:treewidth-dist} 
variants of this lemma are 
used in proofs of an estimate of the dimension of a quantum code in terms of its length and minimum distance.

\begin{proposition}
\label{prop:codedim-abc}
Let $V$ be a symplectic space and $\xi\subset V$ be an isotropic subspace ($\xi\subset\xi^\bot$). Suppose that $\alpha,\beta,\gamma\subset V$ are subspaces such that $\beta\bot\gamma$, $\alpha=(\beta+\gamma)^\bot$, and $\xi^\bot\ncap \alpha\subset\xi$, $\xi^\bot\ncap\beta\subset\xi$. Then
$$|\xi^\bot/\xi|\le |\gamma|.$$
\end{proposition}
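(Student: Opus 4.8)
The plan is to reduce the claimed inequality to the cleaning identity of Theorem~\ref{thm:InnProd}(b) together with a simple sub-additivity property of a ``logical content'' functional. For a subspace $\delta\subset V$ write $\ell(\delta)=|(\xi^\bot\delta+\xi)/\xi|$ for the dimension of the logical operators supported on $\delta$, where $\xi$ is the given isotropic subspace. In this notation the two hypotheses $\xi^\bot\alpha\subset\xi$ and $\xi^\bot\beta\subset\xi$ read simply $\ell(\alpha)=0$ and $\ell(\beta)=0$, while the quantity to be bounded is $|\xi^\bot/\xi|$.

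First I would rewrite the target. Since $\alpha=(\beta+\gamma)^\bot$ and ${}^\bot$ is an involution on $\Gr(V)$, we have $\alpha^\bot=\beta+\gamma$. Applying Theorem~\ref{thm:InnProd}(b) to the isotropic subspace $\xi$ and the subspace $\alpha$ gives $\ell(\alpha)+\ell(\alpha^\bot)=|\xi^\bot|-|\xi|=|\xi^\bot/\xi|$; since $\ell(\alpha)=0$ this yields $|\xi^\bot/\xi|=\ell(\alpha^\bot)=\ell(\beta+\gamma)$. Thus it suffices to prove $\ell(\beta+\gamma)\le|\gamma|$.

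The crux is the sub-additivity estimate $\ell(\delta_1+\delta_2)\le\ell(\delta_1)+|\delta_2|$, valid for arbitrary subspaces. To prove it I would set $U=\xi^\bot(\delta_1+\delta_2)$ and $U_1=\xi^\bot\delta_1\subset U$, and use the projection $\delta_1+\delta_2\to(\delta_1+\delta_2)/\delta_1\cong\delta_2/(\delta_1\delta_2)$: its restriction to $U$ has kernel $U_1$, so $|U/U_1|\le|\delta_2|$. Passing to the quotient by $\xi$, the space $(U+\xi)/(U_1+\xi)$ is a quotient of $U/U_1$, hence of dimension at most $|\delta_2|$; since $\ell(\delta_1)=|(U_1+\xi)/\xi|$ and $\ell(\delta_1+\delta_2)=|(U+\xi)/\xi|$ sit in the chain $\xi\subset U_1+\xi\subset U+\xi$, these two ranks differ by exactly $|(U+\xi)/(U_1+\xi)|\le|\delta_2|$, giving the estimate.

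Finally, applying this with $\delta_1=\beta$ and $\delta_2=\gamma$ and using $\ell(\beta)=0$ yields $|\xi^\bot/\xi|=\ell(\beta+\gamma)\le\ell(\beta)+|\gamma|=|\gamma|$, as required. I expect the sub-additivity lemma to be the only real work; notably it needs neither the isotropy of $\xi$ nor the orthogonality $\beta\bot\gamma$, and the full strength of $\alpha=(\beta+\gamma)^\bot$ enters only through $\alpha^\bot=\beta+\gamma$. The hypothesis $\beta\bot\gamma$ and the remaining content of $\alpha=(\beta+\gamma)^\bot$ appear to be present to secure the intended qubit-support interpretation (with $\beta,\gamma$ the operators on disjoint regions) rather than to drive the inequality itself.
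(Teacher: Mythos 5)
Your proof is correct, and after the opening move it takes a genuinely different route from the paper's --- a difference that turns out to matter. The first step is the same in both: Theorem~\ref{thm:InnProd}(b) (equivalently Theorem~\ref{thm:Agraded-2}) plus the hypothesis $\xi^\bot\cap\alpha\subset\xi$ gives $|\xi^\bot/\xi|=|(\xi^\bot\cap\alpha^\bot+\xi)/\xi|=\ell(\beta+\gamma)$ in your notation. From there, however, the paper does not use anything like your sub-additivity lemma: it asserts the set-theoretic inclusion $\xi^\bot\cap(\beta+\gamma)\subset(\xi^\bot\cap\beta)+\gamma$ and then finishes with $\xi^\bot\cap\beta\subset\xi$, whereas you prove the rank inequality $\ell(\delta_1+\delta_2)\le\ell(\delta_1)+|\delta_2|$ via the projection $\delta_1+\delta_2\to(\delta_1+\delta_2)/\delta_1$ and apply it with $\delta_1=\beta$, $\delta_2=\gamma$.

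What your route buys is soundness: the paper's inclusion is actually false under the stated hypotheses. Take $V=\FF_2^4$ with symplectic form $(u,v)=u_1v_3+u_3v_1+u_2v_4+u_4v_2$ (two qubits), $\xi=\mathrm{span}\{(1,1,0,0),(0,0,1,1)\}$, $\beta=\mathrm{span}\{e_1,e_3\}$, $\gamma=\mathrm{span}\{e_2,e_4\}$, $\alpha=(\beta+\gamma)^\bot=\mathbf{0}$. Then $\xi^\bot=\xi$, and all hypotheses of Proposition~\ref{prop:codedim-abc} hold ($\beta\bot\gamma$, $\xi^\bot\cap\alpha=\mathbf{0}\subset\xi$, $\xi^\bot\cap\beta=\mathbf{0}\subset\xi$), yet $\xi^\bot\cap(\beta+\gamma)=\xi\not\subset\gamma=(\xi^\bot\cap\beta)+\gamma$. (The inclusion would follow from the modular law if $\gamma\subset\xi^\bot$ were assumed, but it is not.) The dimension bound the paper extracts from that inclusion is nevertheless true, and your lemma supplies exactly the missing justification, giving $|\xi^\bot/\xi|=\ell(\beta+\gamma)\le\ell(\beta)+|\gamma|=|\gamma|$ with no appeal to any such inclusion. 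Your side remarks are also accurate: the lemma needs neither the isotropy of $\xi$ nor $\beta\bot\gamma$ (the latter is never actually invoked in the paper's proof either), though note that on a nondegenerate space ${}^\bot$ is an involution, so $\alpha=(\beta+\gamma)^\bot$ and $\alpha^\bot=\beta+\gamma$ are equivalent rather than the latter being strictly weaker.
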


\begin{proof}
    Since $|(\xi^\bot\alpha+\xi)/\xi|=|\xi/\xi|=0$, 
    Theorem~\ref{thm:Agraded-2} gives
    $$|\xi^\bot/\xi| = |(\xi^\bot\alpha+\xi)/\xi|+|(\xi^\bot\alpha^\bot+\xi)/\xi|=|(\xi^\bot\alpha^\bot+\xi)/\xi|=|\xi^\bot\alpha^\bot+\xi|-|\xi|.$$
    Since $\alpha^\bot= \beta+\gamma$, we have  $\xi^\bot\alpha^\bot=\xi^\bot(\beta+\gamma)\subset \xi^\bot\beta+\gamma$.
    Using the assumption $\xi^\bot\beta\subset \xi$, we get
     $$|\xi^\bot/\xi|\le |\xi^\bot\beta+\xi+\gamma|-|\xi|=|\xi+\gamma|-|\xi|\le|\gamma|,$$
as required.
\end{proof}

In applications to quantum stabilizer code with stabilizer space $S$ one has $\xi=S$ and
$|\xi^\bot/\xi|=2k$, where $k$ is the dimension of the code. 
The subspace $\alpha$, $\beta$ and $\gamma$ are chosen as the spaces of operators supported, respectively, on qubit sets $A$, $B$ and $C$ such that $A\sqcup B\sqcup C$ is the whole set of $n$ qubits, while $A$ and $B$ are correctable sets (corresponding to the conditions $\xi^\bot\ncap \alpha\subset\xi$ and $\xi^\bot\ncap\beta\subset\xi$).  

\subsection{``Cleaning Lemma'' for a product of abelian groups}
\label{ssec:misc-groups}

In  Proposition~\ref{prop:logops-lagrangian} and the paragraph following its proof we have discussed quantum CSS codes where the usual attributes
were present with one exception:
the ``spatial structure'' of the quantum state space, specifically --- the notion of support --- played no role in the description of the subspace $\alpha$.

Here we exhibit a situation of, roughly speaking, an opposite kind. We want the notions of ``support'' and ``cleaning'' to be defined, but we deviate from the mathematical structure underlying the QECC theory.
This example is partly motivated by the considerations in \cite{AKP2004}.

Let $\CC^*$ denote the multiplicative group of nonzero complex numbers and $\QQ^+$ denote the multiplicative group of positive rational numbers.

Consider a finite abelian group $G$. Its dual (the group of characters), $\hat G=\mathrm{Hom}(G,\CC^*)$, is isomorphic to $G$ (non-canonically). If $H$ is a subgroup of $G$, we denote by $|H|$ the cardinality of $H$. 
We denote by $\Lambda(G)$ the lattice of subgroups of $G$ partially ordered by set inclusion.
Clearly, $|\cdot|$ is
a $\QQ^+$-grading on $\Lambda(G)$. 

\begin{definition}
 A {\em nondegenerate bicharacter}\ on $G$ is a map $G\times G\to\CC$, $(g,g')\mapsto \langle g,g'\rangle$
 such that for any fixed $g$, the map $\varphi_g=\langle g,\cdot\rangle: G\mapsto\CC^*$, is a character of $G$
 and the map $g\mapsto \varphi_g$ is an isomorphism from $G$ onto $\hat G$.
 The bicharacter is {\em involutive}\ if $\langle g,g'\rangle=1\,\Leftrightarrow \langle g',g\rangle=1$.
 \end{definition}

\begin{example}
(a) Let $G=(\ZZ/n\ZZ,+)$, $n\geq 2$. Then $\langle a,b\rangle=e^{2\pi i abm/n}$ is an involutive (symmetric) nondegenerate bicharacter on $G$ for any $m$ coprime with $n$.

\smallskip
(b) Let $G$ be as above and $G_2=G\oplus G$. Then $\langle a_1\oplus a_2,b_1\oplus b_2\rangle=
e^{2\pi i (a_1b_2-a_2b_1)/n}$ is an involutive (anti-symmetric) 
nondegenerate bicharacter on $G_2$.  
\end{example}

An involutive nondegenerate bicharacter on $G$ gives rise to a quasi-comple\-mentation on $\Lambda(G)$:
if $H$ is a subgroup, then $H^\dag=\{g\mid \langle H,g\rangle = 1\}$.
It is easy to check that the requirements of Definition~\ref{def:gmodcomp} are met. 

By Theorem~\ref{thm:Agraded-2}, the following is true.

\begin{proposition}
\label{prop:subgr-basic}
If $H$ is a subgroup such that $H\subseteq H^\dag$ and $B$ is any subgroup, then
$$
|(H^\dag\cap B)/(H\cap B)|\cdot |(H^\dag\cap B^\dag)/(H\cap B^\dag)| = |H^\dag/H|.
$$
\end{proposition}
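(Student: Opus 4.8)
The plan is to derive Proposition~\ref{prop:subgr-basic} directly from Theorem~\ref{thm:Agraded-2}, exactly as the surrounding text signals. First I would identify the correct instance of the abstract theorem. Here the $G$-modular lattice (in the abstract sense) is $\Lambda(G)$ with the $\QQ^+$-grading $|\cdot|$ given by cardinality, and the quasi-complementation is the map ${}^\dag$ induced by the involutive nondegenerate bicharacter; the paragraph preceding the proposition has already certified that ${}^\dag$ satisfies Definition~\ref{def:gmodcomp}. The group operation $\opg$ in Theorem~\ref{thm:Agraded-2} is multiplication of positive rationals, and the bracket notation unwinds as $[\mu/\lambda]=|\mu|\cdot|\lambda|^{-1}$ and $\zdim{\mu}{\lambda}{\beta}=[(\mu\meet\beta)/(\lambda\meet\beta)]$.

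Next I would match variable names. The statement's subgroup $H$ with $H\subseteq H^\dag$ plays the role of an isotropic element, so I set $\xi=H$; since the second displayed (``In particular'') identity of Theorem~\ref{thm:Agraded-2} is the specialization $\eta=\xi$, I take $\eta=\xi=H$ as well, and I let $\alpha=B$ be the arbitrary test subgroup. With these substitutions $\eta^\dag=\xi^\dag=H^\dag$, and the two factors on the left become
$$
\zdim{\xi^\dag}{\xi}{\alpha}=\frac{|H^\dag\cap B|}{|H\cap B|},\qquad
\zdim{\xi^\dag}{\xi}{\alpha^\dag}=\frac{|H^\dag\cap B^\dag|}{|H\cap B^\dag|},
$$
while the right-hand side is $[\xi^\dag/\xi]=|H^\dag|\cdot|H|^{-1}=|H^\dag/H|$. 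Since $\opg$ is ordinary multiplication, Theorem~\ref{thm:Agraded-2} states precisely that the product of these two quotients equals $|H^\dag/H|$, which is the claimed identity once one rewrites each quotient $|X|/|Y|$ (with $Y\subseteq X$) as the index-style cardinality $|X/Y|$.

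The one genuine verification, rather than a mere substitution, is that the bracket quotients of cardinalities coincide with the cardinalities $|(H^\dag\cap B)/(H\cap B)|$ etc.\ written in the proposition. I would note that $H\cap B\subseteq H^\dag\cap B$ (because $H\subseteq H^\dag$) forces $H\cap B$ to be a subgroup of $H^\dag\cap B$, so by Lagrange's theorem $|H^\dag\cap B|/|H\cap B|$ is an integer equal to the index $[H^\dag\cap B:H\cap B]=|(H^\dag\cap B)/(H\cap B)|$; the same applies with $B^\dag$ in place of $B$ and to $|H^\dag|/|H|=|H^\dag/H|$. This converts the multiplicative $\QQ^+$-identity of the abstract theorem into the stated equation among group-index cardinalities.

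The only point requiring care is confirming that the abstract hypotheses truly hold in this setting. I expect the main (and in fact modest) obstacle to be checking the $G$-modular condition~\eqref{gmodular}, i.e.\ that $|X\join Y|\cdot|X\meet Y|=|X|\cdot|Y|$ for subgroups $X,Y$ of $G$; this is the product formula $|XY|\cdot|X\cap Y|=|X|\cdot|Y|$, valid for subgroups of any finite group, together with the fact that $X\join Y=XY$ holds because $G$ is abelian. Everything else is bookkeeping, so the proof reduces to invoking Theorem~\ref{thm:Agraded-2} with $L=\Lambda(G)$, $\xi=\eta=H$, $\alpha=B$, and translating the resulting $\QQ^+$-equation back into cardinalities of quotient groups.
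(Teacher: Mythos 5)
Your proposal is correct and matches the paper's own route: the paper proves Proposition~\ref{prop:subgr-basic} precisely by invoking Theorem~\ref{thm:Agraded-2} for the $\QQ^+$-graded lattice $\Lambda(G)$ with the bicharacter-induced quasi-complementation, which is exactly your instantiation $\xi=\eta=H$, $\alpha=B$. Your added bookkeeping --- checking the grading condition via $|XY|\cdot|X\cap Y|=|X|\cdot|Y|$ (with $X\join Y=XY$ since $G$ is abelian) and using $H\subseteq H^\dag$ plus Lagrange's theorem to read the $\QQ^+$-quotients as indices $|(H^\dag\cap B)/(H\cap B)|$ --- is exactly what the paper leaves implicit.
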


Let us now exhibit a special case (which, on the other hand, is a generalization) of this model that allows one to talk about elements of $G$ supported on a given set.

Suppose that $\{G_j\}_{j=1}^n$ is a family of abelian groups and $\langle \cdot,\cdot\rangle_j$ a family of 
corresponding involutive nondegenerate bicharacters. 
Let $G=G_1\times\dots\times G_n$.  Elements of $G$ can be written as words of the form $\mathbf{g}=(g_1,\dots,g_n)$
with $g_j\in G_j$. Then
$$
 \langle \mathbf{g},\mathbf{g'}\rangle=\prod_{j=1}^n \langle g_j,g'_j\rangle_j
$$
is an involutive nondegenerate bicharacter on $G$.

Let $M$ be some subset of $[n]$. We say that the element $\mathbf{g}$ is supported on $M$ if $g_j=1_j$ for all $j\in M^c$.

Given $M\subset[n]$, put 
\begin{equation}
\label{BM}
B_M=\prod_{j\in M} G_j\,\times \prod_{j\in M^c} \{1_j\}.
\end{equation}
Since the bicharacters $\langle \cdot,\cdot\rangle_j$ are nondegenerate, we have
$
(B_M)^\dag= B_{M^c}.
$

Proposition~\ref{prop:subgr-basic} takes the form, which parallels versions of Cleaning Lemma
discussed earlier.

\begin{proposition}
\label{prop:CL-ab}
Let $G$ be the direct product of finite abelian groups $G_j$, $j=1,\dots,n$ and ${}^\dag$ be the quasi-complementation on the subgroup lattice $\Lambda(G)$ induced by the bicharacter $\langle\cdot,\cdot\rangle$
on $G$, which is the product of nondegenerate involutive bicharacters on the groups $G_j$.
Suppose $H$ is a subgroup of $G$ such that $H\subseteq H^\dag$.
For any $M\subset[n]$ let 
$\ell_M=|B_M H^\dag/B_M H|$, where $B_M$ is the subgroup \eqref{BM}. Let $\ell_{M^c}=|B_{M^c} H^\dag/B_{M^c} H|$.
Then
$$
\ell_M \ell_{M^c}=|H^\dag/H|.
$$
\end{proposition}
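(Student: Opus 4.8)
The plan is to recognize Proposition~\ref{prop:CL-ab} as a direct specialization of the abstract rank identity Theorem~\ref{thm:Agraded-2}, exactly as Proposition~\ref{prop:subgr-basic} already was, but now with the concrete subgroups $B_M$ and $B_{M^c}$ playing the roles of $B$ and $B^\dag$. First I would record the setting: the lattice $\Lambda(G)$ of subgroups of $G$, $\QQ^+$-graded by cardinality $|\cdot|$, carries the quasi-complementation ${}^\dag$ induced by the nondegenerate involutive bicharacter $\langle\cdot,\cdot\rangle$, and this is a genuine quasi-complementation in the sense of Definition~\ref{def:gmodcomp} (the involutivity and order-reversal are checked in the paragraph preceding Proposition~\ref{prop:subgr-basic}). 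The group operation in the grading is multiplication in $\QQ^+$, so products of cardinalities are exactly the $\opg$-products that appear in the general identity.

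The key computational step is the identification $(B_M)^\dag = B_{M^c}$, which the excerpt already states follows from nondegeneracy of the individual bicharacters $\langle\cdot,\cdot\rangle_j$: a word $\mathbf g$ annihilates every element supported on $M$ if and only if each coordinate $g_j$ with $j \in M$ is trivial, i.e.\ $\mathbf g$ is supported on $M^c$. With this in hand, I would apply Theorem~\ref{thm:Agraded-2} (or equivalently Proposition~\ref{prop:subgr-basic}) with $\xi = H$ and $\alpha = B_M$. The hypothesis $H \subseteq H^\dag$ supplies the isotropy condition needed to invoke the ``in particular'' form of the theorem, and $\alpha^\dag = (B_M)^\dag = B_{M^c}$. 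Reading off the general identity
$$
\zdim{H^\dag}{H}{B_M}\opg\zdim{H^\dag}{H}{B_M^\dag} = [H^\dag/H],
$$
and unwinding the bracket notation of Notation~\ref{not:gmodquot}, the first factor is $|B_M H^\dag / B_M H| = \ell_M$ and the second is $|B_{M^c} H^\dag / B_{M^c} H| = \ell_{M^c}$, while the right-hand side is $|H^\dag/H|$. Since $\opg$ is multiplication in $\QQ^+$, this is precisely $\ell_M\,\ell_{M^c} = |H^\dag/H|$.

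I do not expect a genuine obstacle here; the work is almost entirely translation. The one point requiring care is to confirm that the meet and join in $\Lambda(G)$ match the lattice operations assumed by the abstract theorem, namely that the join of two subgroups is their product $B_M H$ (which is a subgroup precisely because $G$ is abelian, so $B_M H = H B_M$) and the meet is intersection; these are exactly what the $\zdim{\cdot}{\cdot}{\cdot}$ notation evaluates to in $\Lambda(G)$. Thus the only thing to verify beyond pure citation is that $|x \join y|\opg|x y| = |x|\opg|y|$, i.e.\ $|BH|\cdot|B\cap H| = |B|\cdot|H|$, the standard product formula for subgroups of a finite abelian group, which guarantees that $|\cdot|$ is a valid $\QQ^+$-grading. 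Once these routine facts are noted, the proposition falls out immediately from the specialization of Theorem~\ref{thm:Agraded-2}.
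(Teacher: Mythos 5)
Your proposal is correct and takes exactly the paper's route: Proposition~\ref{prop:CL-ab} is the specialization of Theorem~\ref{thm:Agraded-2} (equivalently, of Proposition~\ref{prop:subgr-basic}) to $\xi=\eta=H$, $\alpha=B_M$, combined with the observation $(B_M)^\dag=B_{M^c}$. One small correction to your closing paragraph: by the paper's standing convention, juxtaposition denotes the \emph{meet}, so $B_M H$ in the definition of $\ell_M$ is the intersection $B_M\cap H$ rather than the product subgroup --- your main computation already treats it this way, and the slip is harmless in any case, since under the product reading the two factors merely swap (as $|B_M H^\dag|/|B_M H| = |H^\dag/H|\cdot|B_M\cap H|/|B_M\cap H^\dag|$ by the product formula) and the identity $\ell_M\,\ell_{M^c}=|H^\dag/H|$ is unchanged.
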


The logical alternative between the cases $\ell_M=1$ and $\ell_M>1$ can be stated as follows.

\begin{corollary}
\label{cor:CL-ab-alt}
Let $G$ and $H$ be as in Proposition~\ref{prop:CL-ab} and $M=\{1,\dots,m\}\subset [n]$.
Exactly of the following is true:
\begin{itemize}
    \item[--] either $(H^\dag \setminus H) \cap B_M\ne \varnothing$, 
 that is, some coset $H^\dag/H$ different from $H$ contains an element of the form $(g_1,\dots,g_m,1_{m+1},\dots,1_n)$ (here $\ell_M>1$);
    \item[--] or
   for any coset  $K\in H^\dag /H$ there exists 
   an element of the form $\mathbf{g}=(1_1,\dots,1_m,g_{m+1},\dots,g_n)$ such that $K=\mathbf{g}H$ (here $\ell_M=1$).
\end{itemize}
\end{corollary}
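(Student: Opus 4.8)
The statement to prove is Corollary~\ref{cor:CL-ab-alt}, which asserts a logical dichotomy derived from the multiplicative identity $\ell_M\,\ell_{M^c}=|H^\dag/H|$ of Proposition~\ref{prop:CL-ab}. Let me sketch a proof.

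The plan is to reduce the corollary to a direct interpretation of the quantities $\ell_M$ and $\ell_{M^c}$ combined with the identity of Proposition~\ref{prop:CL-ab}. First I would observe that $\ell_M=|B_M H^\dag/B_M H|\geq 1$ always, since $B_M H\subseteq B_M H^\dag$ (as $H\subseteq H^\dag$), and that the index is $1$ precisely when $B_M H=B_M H^\dag$. The two bullets of the corollary correspond to the mutually exclusive and exhaustive cases $\ell_M>1$ and $\ell_M=1$; the fact that exactly one holds is immediate once I establish that each case coincides with the stated set-theoretic condition.

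The key step is to unwind $\ell_M$ in terms of elements. I would show that $\ell_M>1$ is equivalent to $(H^\dag\setminus H)\cap B_M\neq\varnothing$. The implication is a matter of tracking cosets: $\ell_M>1$ means $B_M H$ is a proper subgroup of $B_M H^\dag$, so there exists $x\in B_M H^\dag\setminus B_M H$. Writing $x=b\cdot h'$ with $b\in B_M$, $h'\in H^\dag$ and using that $x\notin B_M H$, I would argue that $h'=b^{-1}x\in B_M H^\dag$ cannot lie in $H$ (else $x\in B_M H$), and since $b^{-1}\in B_M$, one can arrange a representative of the relevant coset sitting inside $B_M$ itself; concretely, $h'\in H^\dag$, and the element $b^{-1}x$ witnesses a nontrivial coset of $H^\dag/H$ meeting $B_M$. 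Conversely, if some $g\in(H^\dag\setminus H)\cap B_M$ exists, then $g\in B_M H^\dag$ but $g\notin B_M H$ (otherwise $g=b h$ with $b\in B_M$, $h\in H$ forces $b^{-1}g=h\in H$, and since $g,b^{-1}\in B_M$ one would need to check this carefully), so $B_M H\neq B_M H^\dag$ and $\ell_M>1$.

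For the second bullet, I would note that $\ell_M=1$ forces, via Proposition~\ref{prop:CL-ab}, the equality $\ell_{M^c}=|H^\dag/H|$, which says that the natural map $B_{M^c}H^\dag/B_{M^c}H\to H^\dag/H$ is surjective; since $B_{M^c}=B_{M}^\dag$ consists exactly of elements supported on $M^c$, this means every coset $K\in H^\dag/H$ has a representative of the form $\mathbf{g}=(1_1,\dots,1_m,g_{m+1},\dots,g_n)$, which is the claimed statement. Here I would make explicit that $B_{M^c}$ is the group of elements supported on $M^c$ and that $\ell_{M^c}=|H^\dag/H|$ means the composite $B_{M^c}\cap H^\dag\to H^\dag/H$ hits every coset. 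The main obstacle I anticipate is the bookkeeping in the coset-representative argument: one must be careful that membership in $B_M H$ versus $B_M H^\dag$ translates correctly into the existence of a representative \emph{inside} $B_M$ (and not merely inside the product $B_M H^\dag$), which requires using that $B_M$ is itself a subgroup and that cosets of $H$ inside $H^\dag$ intersecting $B_M$ are exactly the nontrivial contributions to $\ell_M$. Once that translation is pinned down, the dichotomy ``exactly one holds'' follows from the trivial observation $\ell_M\in\{1\}\cup\{2,3,\dots\}$.
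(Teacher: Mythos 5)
Your argument is built on a misreading of the notation in Proposition~\ref{prop:CL-ab}, and the misreading is fatal to your first (and key) step. Throughout the paper juxtaposition denotes the \emph{meet}, i.e.\ intersection (``we write $\alpha\beta$ instead of $\alpha\wedge\beta$ or $\alpha\cap\beta$'' in the Preliminaries); indeed Proposition~\ref{prop:CL-ab} is just Proposition~\ref{prop:subgr-basic} --- stated there with explicit intersections --- applied to $B=B_M$, $B^\dag=B_{M^c}$. So $\ell_M=\left|(B_M\cap H^\dag)/(B_M\cap H)\right|$. You instead read $B_M H^\dag$ as the product set $\{bh'\mid b\in B_M,\,h'\in H^\dag\}$ (visible when you write ``$x=b\cdot h'$''), and under that reading your claimed equivalence ``$\ell_M>1$ iff $(H^\dag\setminus H)\cap B_M\neq\varnothing$'' is false in both directions. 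For the forward direction, take $n=2$, $G_1=G_2=\ZZ_2\oplus\ZZ_2$ with the antisymmetric bicharacter of the paper's Example (b), $H$ the order-$2$ subgroup generated by $\bigl((1,0),(0,0)\bigr)$, and $M=\{1\}$: then $H^\dag=\{((a_1,b_1),(a_2,b_2))\mid b_1=0\}$, so $H^\dag\cap B_M=H$ and $(H^\dag\setminus H)\cap B_M=\varnothing$, yet $B_M\cdot H^\dag=G$ and $B_M\cdot H=B_M$ give $|B_M\cdot H^\dag|/|B_M\cdot H|=16/4=4>1$. The failure is exactly at the step you flagged: $h'=b^{-1}x$ lies in $H^\dag\setminus H$, but nothing places it in $B_M$, so it witnesses a nontrivial coset of $H^\dag/H$ but not one meeting $B_M$. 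The converse direction collapses too: any $g\in(H^\dag\setminus H)\cap B_M$ satisfies $g=g\cdot 1\in B_M\cdot H$ trivially, so your assertion $g\notin B_M H$ is simply wrong under the product reading. (In fact $|B_M\cdot H^\dag|/|B_M\cdot H|=|H^\dag/H|\cdot|B_M\cap H|/|B_M\cap H^\dag|$, i.e.\ the product-quotient computes $\ell_{M^c}$, not $\ell_M$; this is why the identity $\ell_M\ell_{M^c}=|H^\dag/H|$ accidentally survives your reading, but the parenthetical identifications ``here $\ell_M>1$'', ``here $\ell_M=1$'' in the corollary do not.)

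With the intended (intersection) reading, the corollary is essentially immediate, and your second paragraph already contains the right mechanism --- you even switch there to the correct composite $B_{M^c}\cap H^\dag\to H^\dag/H$. The clean statement is: the homomorphism $(B_M\cap H^\dag)/(B_M\cap H)\to H^\dag/H$, $g(B_M\cap H)\mapsto gH$, is well defined and injective (its kernel is $(B_M\cap H^\dag\cap H)/(B_M\cap H)$, which is trivial since $H\subseteq H^\dag$), and its image is precisely the set of cosets of $H$ in $H^\dag$ containing an element supported on $M$. Hence $\ell_M$ counts exactly those cosets; since the trivial coset always qualifies, $\ell_M>1$ iff $(H^\dag\setminus H)\cap B_M\neq\varnothing$, and likewise $\ell_{M^c}=|H^\dag/H|$ iff every coset has a representative supported on $M^c$. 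The dichotomy then follows from Proposition~\ref{prop:CL-ab}: if the first bullet fails then $\ell_M=1$, so $\ell_{M^c}=|H^\dag/H|$ and the second bullet holds; if the first bullet holds then $\ell_{M^c}<|H^\dag/H|$, so the second fails. This is the (implicit) proof the paper intends; your draft needs the notational correction and the injectivity observation above to become that proof.
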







\end{document}